\documentclass[sn-basic]{sn-jnl}

\jyear{2025}%

\theoremstyle{thmstyleoone}%
\newtheorem{theorem}{Theorem}
\theoremstyle{thmstyletwo}%
\newtheorem{remark}{Remark}%

\theoremstyle{thmstylethree}%
\newtheorem{definition}{Definition}%

\usepackage{graphicx}
\usepackage{enumerate}
\usepackage{natbib}
\usepackage{url}
\usepackage{color}
\usepackage{ulem}
\usepackage{bm}
\usepackage{multicol}
\usepackage{blindtext}

\usepackage[T1]{fontenc} 
\usepackage[utf8]{inputenc} 
\usepackage[english]{babel}
\usepackage{amsmath,amssymb}
\usepackage{mathtools}

\usepackage{microtype} 
\usepackage{rotating} 
\usepackage{booktabs}  
\usepackage{array}     
\usepackage{multirow} 
\usepackage{enumitem}

\newcommand{\1}{\mathbf{1}}

\newcommand{\upp}{{\mathrm{upp}}}
\newcommand{\low}{{\mathrm{low}}}

\newcommand{\lo}{\mathrm{low}}
\newcommand{\up}{\mathrm{upp}}

\newcommand{\RR}{{\rho}}

\begin{document}

\title[Formal Graphical Inference]{
A Formal Graphical Inference Framework for Combining Effect Size with Statistical Significance: Application to Multivariate and Functional Linear Models
}


\author*[1]{\fnm{Tom\' a\v s} \sur{Mrkvi\v cka}}\email{mrkvicka.toma@gmail.com}

\author[2]{\fnm{Mikko} \sur{Kuronen}} \email{mikko.kuronen@luke.fi}

\author[2,3]{\fnm{Mari} \sur{Myllym{\"a}ki}}\email{mari.myllymaki@luke.fi}

\affil*[1]{\orgdiv{Faculty of Agriculture and Technology}, \orgname{University of South Bohemia}, \orgaddress{\street{Studentsk\'a 1668}, \city{\v{C}esk\'e Bud\v{e}jovice}, \postcode{37001}, \country{Czech Republic}}}

\affil[2]{ \orgname{Natural Resources Institute Finland (Luke)}, \orgaddress{\street{Latokartanonkaari 9}, \postcode{FI-00790} \city{Helsinki}, \country{Finland}}}

\affil[3]{ \orgname{Department of Mathematics and Statistics, Faculty of Science, University of Helsinki}, \orgaddress{\street{PL 68, Pietari Kalmin katu 5}, \postcode{FI-00014} \city{HELSINGIN YLIOPISTO}, \country{Finland}}}

\abstract{
To address the critical need for statistical methods that evaluate practical magnitude alongside statistical significance, we introduce a formal graphical inference framework that intrinsically combines effect size with uncertainty. Built upon resampling and global envelopes, this approach provides universality and direct visual interpretability. While existing envelope tests provide only weak family-wise error rate (FWER) or false discovery rate control, we advance the methodology by introducing novel step-down global envelopes. We theoretically prove that several of these envelopes achieve exact strong FWER control under exchangeability, guaranteeing rigorous inference for each individual local hypothesis. Beyond yielding adjusted $p$-values, the framework provides adjusted subset $p$-values for blocks of hypotheses, such as the functional effect of a covariate within a specific category. By explicitly visualizing the size and direction of the effect relative to its variability under the null hypothesis, our approach overcomes the dichotomous nature of traditional testing and provides profound informational value. The proposed framework is applied to multivariate and functional linear models. 
}



\maketitle
\thispagestyle{plain}
\pagestyle{plain}

\textit{Keywords:} family-wise error rate, functional linear model, global envelope test, multiple testing, resampling

\section{Introduction}\label{sec:intro}
In response to the American Statistical Association’s warning against relying solely on arbitrary thresholds of statistical significance \citep{WassersteinLazar2016}, researchers have sought more robust metrics to quantify scientific evidence. While effect sizes are increasingly reported to distinguish practical importance from mere statistical detection \citep{SullivanFeinn2012}, other tools address specific inferential gaps; for instance, VanderWeele’s $E$-value assesses the robustness of causal claims against unmeasured confounding \citep{VanderWeeleDing2017}. Concurrently, methodological shifts have introduced Bayes factors to quantify the relative evidence for competing hypotheses \citep{KassRaftery1995} and Second-Generation $p$-values (SGPV) \citep{BlumeEtAl2018}, which incorporate a zone of indifference to avoid trivial significant findings. A distinct paradigm puts forward Shafer’s $E$-value \citep{Shaffer2021} (or betting score), which conceptualizes the test statistic as a bet against the null hypothesis. However, the ''testing by betting'' approach faces criticism regarding its objectivity: the resulting score depends inherently on the specific alternative hypothesis or ''bet'' selected by the researcher \citep{Henning2021}, with different specified alternatives yielding different betting scores for the same data \citep{Dawid2021}. 
The betting score thus lacks the uniqueness of a standard test, complicating its use as a standardized measure of evidence. To resolve the tension between statistical detection and practical magnitude without introducing such ambiguity, a new methodological framework is required—one that intrinsically integrates effect size into the assessment of uncertainty. 

We propose a framework that utilizes the effect size as the primary test statistic, estimating its sampling distribution via Monte Carlo simulation or permutation testing. A key feature of this method is the visualization of the null hypothesis acceptance region alongside the observed effect size. This allows for a comprehensive evaluation of the effect's magnitude relative to the variability under the null hypothesis. 
By doing so, the approach circumvents the limitations of interpreting the $p$-value through a strict dichotomous threshold. 
Instead, it fosters a more nuanced interpretation where the effect size is directly contextualized within the test's acceptance boundaries. 
Consequently, this helps distinguish scenarios where a $p$-value is statistically significant but the underlying effect is negligible — a common pitfall of relying solely on $p$-values — as well as situations where a substantial effect fails to reach significance due to high variability.

Moreover, resampling methods (such as permutation tests) can provide robust alternatives to conventional parametric tests. Unlike parametric methods that rely on asymptotic approximations via the central limit theorem 
or specific distributional assumptions, resampling methods 
derive the null distribution directly through the resampling scheme and 
provide valid inference even with smaller sample sizes or non-standard distributions under an appropriate resampling scheme. The proposed approach is founded on two core principles: Universality, regarding the applicability of the test statistic, and Visual Interpretability, ensuring that the results are intuitive and easily understood. 

The advantages of this formal graphical inference become more pronounced as the test statistic becomes multivariate, functional, 2- or 3-dimensional functional, or even multi-functional. This is demonstrated below with a motivational example. Such settings require rigorous and powerful multiple testing adjustments.

Since our scope is universality, i.e.,  resampling methods are used to address the variability of the test statistic, the single-step or step-down resampling algorithms proposed in \citet{WestfallYoung1993} and \citet{GeEtAl2003} provide a good starting point. Nevertheless, these methods rely on the minimum pointwise resampled $p$-value and thus they can be extremely conservative, especially when computational constraints limit the number of resamples. This can be the case, e.g., when the test statistic is high-dimensional or resamples are computationally expensive. 
As a remedy, in our earlier work, we proposed the single-step global envelope test \citep{MyllymakiEtal2017, MyllymakiMrkvicka2024, MrkvickaEtal2022}, which uses special extremity measures to resolve the problem of ties between the pointwise $p$-values while retaining the intrinsic graphical interpretation (IGI). The IGI property guarantees a one-to-one correspondence between the rigorous test result and the graphical interpretation given by the global envelope (adjusted acceptance region for multiple testing) and the test statistic.

The single-step global envelope test is a global test, i.e., it tests the complete null hypothesis against the alternative that at least one local null hypothesis is false. Such a global test is useful, e.g., in goodness-of-fit testing, when the global hypothesis is of interest. 

When adjusted inference is required for all hypotheses or for a global test of a subset of hypotheses, as in the motivational example below, 
the step-down principle recommended in the literature for strong family-wise error rate (FWER) control is more appropriate.
Therefore, in this paper, we propose the step-down global envelope test as a tool for formal graphical inference. We prove that the proposed procedure controls the FWER strongly, similarly to the step-down resampling minimum $p$-value algorithm. It provides a one-to-one correspondence with the proposed graphical interpretation and satisfies strong FWER control not only for individual hypotheses but also for global tests of subsets of hypotheses. Thus, the proposed algorithm satisfies the Universality and Interpretability properties and is formally correct. 
We propose to use it 
together with the effect size test statistic as a tool to overcome the problems of the dichotomous interpretation of $p$-values in a wide range of applications with univariate, multivariate, functional, or multi-functional test statistics. 
Its applicability is limited only by the existence of an exchangeable or asymptotically exchangeable resampling strategy under the null hypothesis.

The universality and visual interpretability principles can be achieved under the false discovery rate (FDR) control of multiple testing by the so-called false discovery rate envelope \citep{MrkvickaMyllymaki2023}. Nevertheless, we focus here on strong FWER control, as it is one of the most stringent forms of error control in multiple testing. When functional data are under study it is also possible to consider other criteria for controlling false rejections, e.g., interval-wise error rate \citep{PiniVantini2017} and threshold-wise error rate \citep{AbramowiczEtal2023}. 

The proposed step-down global envelope tests are based on the step-down $p$-value algorithm studied in detail in \citet{RomanoWolf2005}. However, the universality principle can also be achieved by other approaches, for example, bootstrap or conformal approaches can be used instead of Monte Carlo methods. The bootstrap allows for approximation of the distribution under the null hypothesis \citep{DavisonHinkley1997}, whereas Monte Carlo tests can provide exact inference \citep{Barnard1963}. The conformal approach \citep{ShaferVovk2008} can be used to test the non-conformity of the data with resampled data. When used in this way, the conformal approach is equivalent to the Monte Carlo test considered in our paper.  

The methods proposed in this article are implemented in the R package GET \citep{MyllymakiMrkvicka2024}. The official version is available from the Comprehensive R Archive Network (CRAN) at \url{https://cran.r-project.org/package=GET}. The methods proposed here are currently included in the development version of
the package, available from \url{https://github.com/myllym/GET}. 

\subsection{Statistics for the effect size determination}

In this article, we apply the proposed framework to multivariate and functional general linear models (GLMs).
To establish a universally interpretable test statistic within this framework, we recommend standardizing both the continuous response and any continuous explanatory variables prior to model fitting. This straightforward transformation is widely advocated in the literature, particularly when formulating models that include interaction terms \citep[e.g.,][]{AikenWest1991, Schielzeth2010}. Consequently, the resulting regression coefficient serves directly as a standardized effect size: a value of $1$ explicitly indicates that a one standard deviation shift in the predictor corresponds to a one standard deviation change in the response. Standardized coefficients are also comparable in magnitude within models as well as between studies. For categorical explanatory variables, although standardized effect size metrics exist (such as Cohen's $d$, Hedges' $g$, or generalized $\eta^2$), we recommend retaining standard contrast coding without further standardization. By employing the model coefficients (or their linear combinations) as the test statistic, the proposed methodology allows for the simultaneous evaluation of all group levels or all post-hoc pairwise differences. The resulting graphical output then intrinsically visualizes the estimated deviations of each group from the overall mean—or the exact differences between specific groups—alongside their rigorous statistical significance. Notably, for a balanced binary predictor (e.g., gender with equal sample sizes), the deviation of a specific category from the overall mean corresponds exactly to a one standard deviation shift in this categorical predictor. Inspired by the input scaling principles introduced by \citet{Gelman2008}, this property ensures that these categorical deviations intuitively maintain a comparable effect size interpretation to the standardized continuous variables. This approach directly preserves the practical interpretability of categorical effects while harnessing the full power of the formal graphical inference.

\subsection{Motivational example}\label{sec:motivational_example}

The advantages of the proposed methodology are illustrated with the following data example. The question of interest is how GDP (gross domestic product) within four regions of the world (vaguely called Continents) has affected population growth over the period 1960-2014. The data come from 112 countries with more than a million inhabitants in 1950 \citep[Figure \ref{fig:populationgrowthdata}, top; source: the World Bank;][]{NagyEtal2017}. That is, we wish to test the interaction effect of the covariates (GDP and Continent) on the response (Growth). 

There are several approaches for statistical inference in the considered linear model (LM) for the functional data. 
A key advantage of the graphical method, the global envelope test, is that it provides rich information, due to the universal choice of the test statistic. Here, the test statistic is the standardized regression coefficient related to the effect of GDP within each continent (Figure \ref{fig:populationgrowthdata}, bottom, black line with red dots). 
The test output shown in Figure \ref{fig:populationgrowthdata} (bottom) gives direct explanations for the rejection of the null model (with $p=0.001$). 
Firstly, it shows significant continents (for all continents $p\leq 0.05$) 
and years (red dots) where the observed regression coefficient leaves the 95\% global envelope (acceptance region; grey band). 
Secondly, the test output provides the direction and size of deviation from the null model, which expresses the observed effect size. This is not provided by $p$-values. 
For example, the observed coefficient exceeds the envelope for the Europe and North America panel (bottom left panel) for years from 1998 onwards, i.e., countries of this continent with higher GDP also had higher growth rates in these years. 
The effect size is about 1, i.e., if GDP is changed by one standard deviation, then the population growth is increased by one standard deviation.
Thus, even though the observed effect is just above the envelope, it is prominent. 

It should be noted that the width of the envelope varies for the different continents and years due to the test being adjusted to the inhomogeneity of the test statistic distribution. No distributional or homogeneity assumptions are required. The necessity of the homogeneity assumption for the usual permutation $F$-max test \citep{WinklerEtal2014} is discussed in the next subsection. 

The weak FWER, under which we developed the global envelope test \citep{MyllymakiEtal2017}, controls the probability of false detection for all hypotheses (here: all years and continents) jointly. Thus, it is suitable for testing whether the multivariate or functional statistic is significant at least for one hypothesis. 
In contrast, here, we work with the strong FWER control, which controls the occurrence of a false detection for any configuration of true and false hypotheses \citep{WestfallYoung1993}, for example for each continent.
Thus, we can trust the significant detections for all four continents (Figure \ref{fig:populationgrowthdata}, bottom).

\begin{figure}[ht!]
    \centering
    \includegraphics[width=0.8\linewidth]{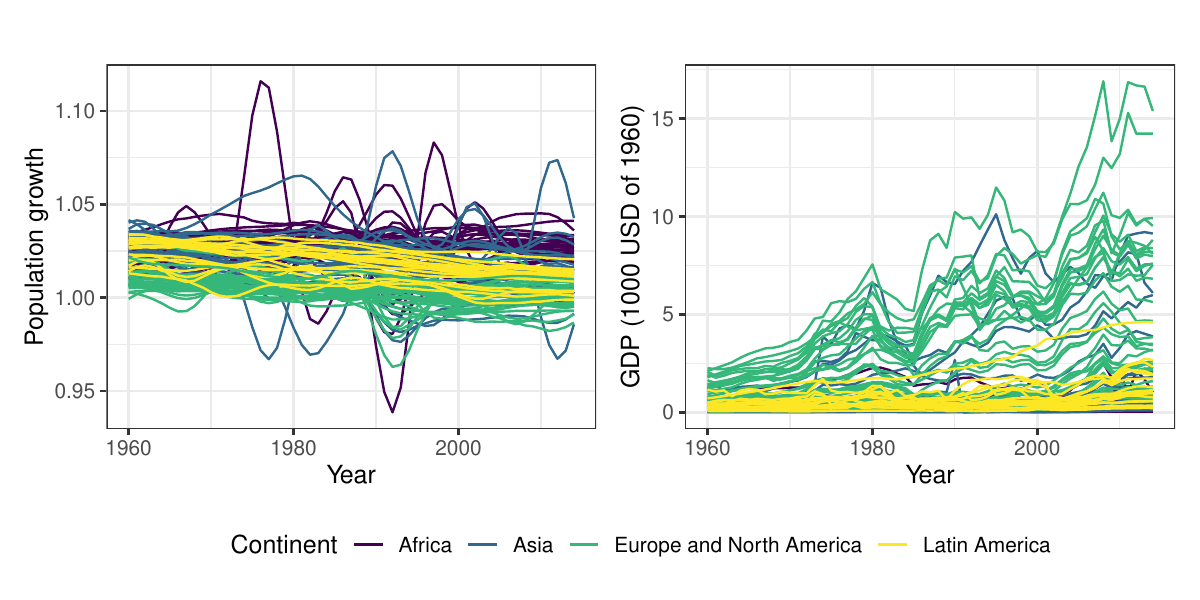}\\
    \includegraphics[width=0.8\linewidth]{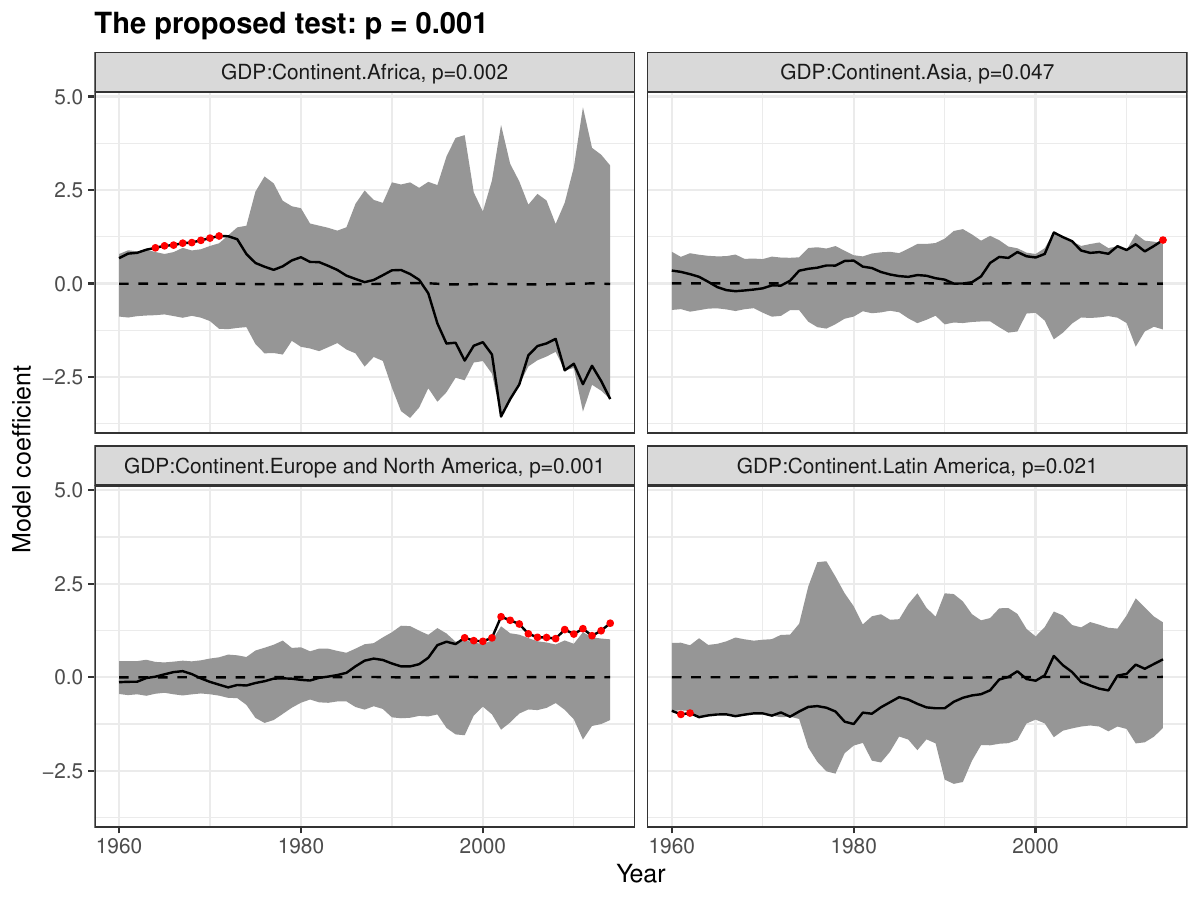}
    \caption{Data (top) and test output (bottom) of the illustrative example. See text for details. 
    GDP is measured in 1960 USD, according to USD inflation, and growth is equal to the population at the end of the year divided by the population at the beginning of the year.  
    Bottom: Black line is the standardized model coefficient estimated from data, grey band is the acceptance region of the test of no joint effect of GDP and Continent on population growth, $p$-values in the panels show the adjusted $p$-values of the respective Continent, which we call adjusted subset $p$-values, and red dots highlight significant years within each Continent.
    }
    \label{fig:populationgrowthdata}
\end{figure}

\subsection{Limitations of traditional maximum-statistic procedures}
\label{sec:fmax_limitations}

\begin{figure}
    \centering
    \includegraphics[width=0.5\linewidth]{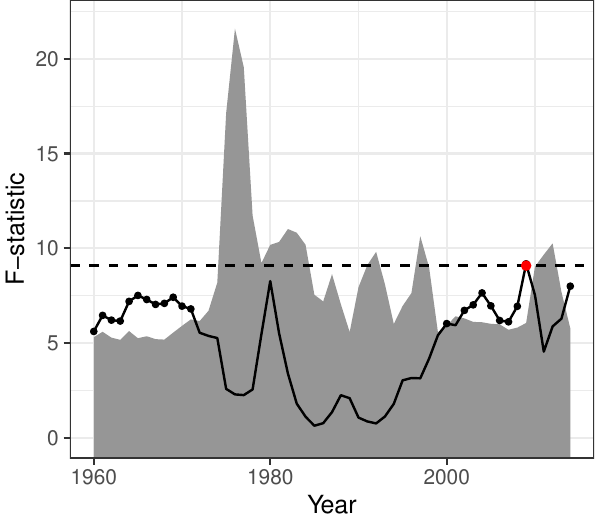}\includegraphics[width=0.5\linewidth]{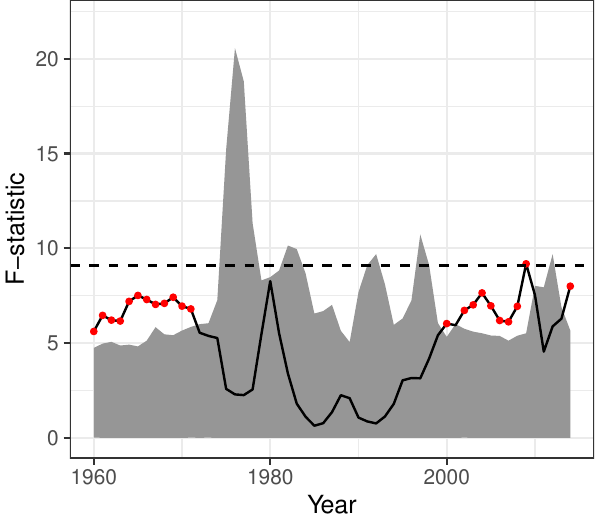}
    \caption{Left: The rank envelope (grey area; corresponding to the pointwise 0.9974 quantile) applied to the $F$-statistics; the dashed line represents the constant $F$-max threshold, and the red point is the only location where $F$-max detects a significant difference.
    Right: The proposed \texttt{sCont} envelope (grey area) applied here to the $F$-statistics, identifying 22 significant points.}
    \label{fig:placeholder}
\end{figure}

In the literature, graphical tests for multivariate or functional data often rely on permutation approaches combined with the pointwise $F$-statistic and the global maximum procedure (e.g., $F$-max) to control the family-wise error rate \citep{WinklerEtal2014}. However, this standard approach has two severe limitations. First, it restricts the user primarily to the $F$-statistic, violating the principle of universality. Consequently, it does not inherently allow for the use of arbitrary test statistics, such as effect sizes, which are crucial for direct, practical interpretations and implicit post-hoc analyses (see the results in Figure \ref{fig:placeholder} and compare them to those in Section \ref{sec:motivational_example}). 

Second, the $F$-max procedure relies heavily on the assumption that the null distribution of the test statistic is relatively homogeneous across the entire functional or multivariate domain. When this assumption fails, the $F$-max test can suffer a substantial loss of power. The motivational example from the previous section which is analysed using the $F$-statistic (Figure \ref{fig:placeholder}) perfectly illustrates this vulnerability. In the left panel, the rank envelope applied to the $F$-statistic (which corresponds to the 99.74\% pointwise quantiles of the $F$-statistic) exhibits extreme heterogeneity, with approximately a 4-fold difference in the envelope size between its narrowest and widest points. 
The $F$-max procedure, illustrated by the dashed line in Figure \ref{fig:placeholder}, imposes a single, constant critical threshold across this entire heterogeneous domain. Although the $F$-max test is globally exact, it achieves this exactness essentially by averaging across the domain. Consequently, in regions where the true pointwise quantiles are substantially lower than the global $F$-max threshold, the test becomes overly conservative. Conversely, in highly variable regions where the true quantiles grow beyond the constant threshold, the $F$-max procedure behaves disproportionately liberally relative to the rest of the domain. 

The practical impact of this limitation is severe: in the presented example, the $F$-max procedure detects only a single significant value (Figure \ref{fig:placeholder}, left panel, red dot). In stark contrast, when applying the proposed step-down continuous rank envelope (sCont) using the $F$-statistic (Figure \ref{fig:placeholder}, right panel), the proposed framework inherently computes the local threshold by constructing the envelope. This robust approach avoids the pitfalls of a constant threshold and successfully identifies 22 significant points while strictly maintaining strong FWER control. 

An example of a mechanism driving another severe inflation of the $F$-statistic in localized regions is detailed in Appendix \ref{app:toy_example} through a controlled simulation scenario.

\section{Background}
\subsection{Multiple hypothesis testing}

Let us assume that we have $m$ tests of null hypotheses $H_{01}, \ldots , H_{0m}$. Table \ref{tab:def} specifies the outcomes arising from $m$ tests. The total number of type I errors among the $m$ tests is denoted by $V$, and $T$ corresponds to the number of type II errors.

\begin{table}
\caption{\label{tab:def}Possible outcomes from $m$ hypothesis tests.}
\centering
\begin{tabular}{cccc} \hline
& Accept & Reject & Total  \\ \hline
True null hypothesis & $U$ & $V$ & $m_0$ \\ 
False null hypothesis & $T$ & $S$ & $m_1$ \\ 
Total & $W$ & $\RR$ & $m$ \\ \hline
\end{tabular}
\end{table}

Let $C$ denote the set of all hypothesis indices, $1, \ldots , m$. For any $K\subset C$, define $H_0^K = \cap_{i \in K} H_{0i}$, i.e., all hypotheses in the subset $K$ are true simultaneously.
The weak FWER  
is defined to be $\mathbb P(V\geq 1 \mid H_0^C)$. The strong FWER 
is $\max_{K\subset C} \mathbb P(V\geq 1 \mid H_0^K)$.

The procedure is said to control weak/strong FWER if its weak/strong FWER is less than or equal to a predetermined significance level $\alpha$. Strong control procedures are usually constructed on the basis of adjusted $p$-values $p_i, i\in C$, that are compared with $\alpha$ and all hypotheses for which $p_i\leq \alpha$ are rejected. On the other hand, they can also be constructed on the basis of a global acceptance region, which we refer here as an envelope. 

We concentrate on resampling-based procedures, where the multivariate/functional test statistic is resampled under the complete null hypothesis. The number of resamples is denoted by $s$. When $s$ increases, we define asymptotic weak/strong FWER control. That is, the procedure is said to control weak/strong FWER asymptotically if its weak/strong FWER is asymptotically less than or equal to $\alpha$.

We will distinguish a global test, i.e., a test of $H_0^C$ against the alternative that at least one null hypothesis is not true, and a local test, i.e., a multiple testing scenario with $m$ hypotheses.

A univariate/multivariate/functional Monte Carlo test is a procedure in which a test statistic $T_0$ taking values in the measurable space $\mathcal S^m$ is resampled $s$ times under the complete null hypothesis, obtaining $T_1, \ldots , T_s$. The measure of extremeness $M$ is computed for every $T_j$ and the null hypothesis is rejected if $M(T_0)$ is among $100\cdot\alpha$\% of the most extreme resamples. If $T_0$ is included in the resampled test statistics in the inference, i.e., if $p=\frac{1}{s+1} \sum_{j=0}^s\1_{M(T_0) \leq M(T_j)}$ (assuming small values of $M$ being extreme), then this Monte Carlo test is exact \citep{Barnard1963, MyllymakiEtal2017}. 
If $T_0$ is not included in the resampled test statistics, then the Monte Carlo test is approximate.
This Monte Carlo test is global and thus it satisfies the weak FWER control for any measure $M$ under the assumption of exchangeability, which means that the joint probability distribution of the variables $T_0, \ldots , T_s$
is permutation invariant, i.e.,
$P\{ (T_0, \ldots, T_s) \in A\}=P\{ (T_{\sigma (0)}, \ldots,  T_{\sigma (s)}) \in A\}$
for any measurable set $A\subset \mathcal S^{m(s+1)}$ and any permutation $\sigma$.

\subsection{Single-step global envelope test with weak FWER control}

The single-step global envelope test defined in \cite{MyllymakiEtal2017} is a global test designed for testing the complete null hypothesis. It uses a Monte Carlo test based on the extremity measure $M$. Five different measures were proposed in \cite{MyllymakiEtal2017} and \cite{MrkvickaEtal2022}: rank, extreme rank length (ERL), cont, area, and qdir. For the sake of completeness, we repeat the definitions here. 

Let us denote the test statistics $T_j = (T_j^1, T_j^2, \dots, T_j^m)$ for $j=0,1,\dots,s$, where the upper index refers to the hypotheses $i=1,\dots,m$.

\subsubsection{Extremity measures}
{\bf Extreme rank (rank)}.
The extreme rank measure of $T_j$ is defined as
\begin{equation}\label{eq:extremerank}
R_j = \min_{i\in C} R_j^i,    
\end{equation}
where $R_j^i$ is the pointwise rank of the test statistic $T_j^i$ among $\{T_0^i, \ldots , T_s^i\}$. The pointwise ranks can be defined as two-sided, decreasing (largest values get smallest ranks) or increasing (smallest values get smallest ranks), according to whether the alternative for $H_i$ is two-sided, greater or less, respectively. The most extreme pointwise rank always obtains the smallest value 1. 

The Monte Carlo test with extreme rank measure corresponds to the well-known $p$-min approach \citep{WinklerEtal2014}. It also corresponds to the single-step resampling algorithm proposed in \cite{WestfallYoung1993} up to the fact that Westfall and Young proposed not to include the data function in the null set and therefore produce an approximate test.

Since $R_j, j=0, \ldots, s$, can have many ties, relying on extreme rank can be rather conservative, especially for big $m$ or small $s$. Therefore, the following three methods for breaking the ties were proposed:
\begin{itemize}
    \item The extreme rank length approach \citep{MyllymakiEtal2017, NarisettyNair2016} takes into account the number of pointwise ranks equalling the extreme rank, 
    \item the continuous rank approach \citep{MrkvickaEtal2022} measures the  extremeness of $T_j^i$ continuously, and 
    \item the area rank \citep{MrkvickaEtal2022} combines the above two approaches.
\end{itemize}

{\bf Extreme rank length (ERL)}.
The ERL measure of $T_j$ is defined based on the vector of the pointwise ordered ranks
$\mathbf{R}_j=(R_j^{[1]}, \ldots , R_j^{[m]})$, where the ranks $R_j^i$, $i=1,\dots,m$, are arranged from smallest to largest.  
The ERL measure of $T_j$ is
\begin{equation}
\label{eq:lexicalrank}
   E_j = \frac{1}{s+1}\sum_{j^\prime=0}^{s} \1(\mathbf{R}_{j^\prime} \prec \mathbf{R}_j)
\end{equation}
where
\[
\mathbf{R}_{j^\prime} \prec \mathbf{R}_{j} \quad \Longleftrightarrow\quad
  \exists \, n\leq m: R_{j^\prime}^{[k]} = R_j^{[k]} \;\forall\, k < n, R_{j^\prime}^{[n]} < R_j^{[n]}.
\]

{\bf Continuous rank (cont)}.
The continuous rank measure of $T_j$ is
\begin{equation}\label{eq:cont}
C_j = \frac{1}{s+1}\min_{i \in C} C_j^i
\end{equation}
where $C_j^i$ are pointwise continuous ranks, and $s+1$ is merely scaling of the values to the interval from 0 to 1.
The definition of 
$C_j^i$ depends on the alternative hypothesis similarly to $R_j^i$:
\begin{equation*}
  C_j^i=\begin{cases}
    c_j^i, &\text{for the one-sided case 'less'} \\
    s-c_j^i, &\text{for the one-sided case 'greater'} \\
   \min(c_j^i, s-c_j^i), &\text{for the two-sided case}.
  \end{cases}
\end{equation*}
where $c_j^i$ is the raw continuous rank of $T_j^i$ among $T_0^i, \ldots, T_s^i$ according to Definition \ref{contrank}.

\begin{definition}\label{contrank} 
Let $y_{[1]} \le y_{[2]} \le \ldots \le y_{[L]}$ denote the ordered set of values $y_l, l=1,2, \ldots, L$.
Define the raw continuous rank such that the smallest $y_l$ has the smallest rank:
\begin{equation*}
c_{[k]} = k - 1 + \frac{y_{[k]}-y_{[k-1]}}{y_{[k+1]}-y_{[k-1]}}
\end{equation*}
for $k = 2, 3, \ldots, L-1$ and
\begin{align*}
c_{[1]} = \exp\left(-\frac{y_{[2]}-y_{[1]}}{y_{[L]}-y_{[2]}}\right) , \quad
c_{[L]} =  L - \exp\left(-\frac{y_{[L]}-y_{[L-1]}}{y_{[L-1]}-y_{[1]}}\right).
\end{align*}
If there are ties, $y_{[l-1]} < y_{[l]} = \ldots = y_{[l']} < y_{[l'+1]}$, then the raw continuous rank is defined as $c_{[k]} = \frac{l+l'}{2} - \frac{1}{2}$ for $k = l, l+1, \ldots, l'$.
\end{definition}

{\bf Area rank (area)}. 
The area rank measure of $T_j$ is
\begin{equation}\label{eq:area_rank_measure}
A_j = \frac{1}{s+1}\left(R_j - \frac{1}{m}\sum_{i=1}^m (R_j-C_j^i) \1(C_j^i < R_j)\right).
\end{equation}

The global $p$-value for the single-step global envelope test with measure $M$ is 
\begin{equation}\label{eq:global_pM}
  p^M=\frac{1}{s+1} \sum_{j=0}^s\1({M_j \leq M_0}).   
\end{equation}
Here $M_j$ stands for $E_j$ when the ERL measure is considered, $C_j$ when the continuous measure is considered, and $A_j$ when the area measure is considered.

{\bf Directional quantile (qdir)}.
The previous four measures are based on the whole non-parametric distributions of the test statistics $T_j^i$ for each $i$. 
However, it is also possible to approximate the distribution by using only a few sample characteristics. 
The directional quantile measure \citep{MyllymakiEtal2015, MyllymakiEtal2017} approximates the empirical distribution using three parameters: the expectation or median $\mu^i$, the upper $\beta$\% quantile $\overline{T}^i$, and the lower $\beta$\% quantile $\underline{T}^i$. It is defined as
\begin{equation}\label{eq:qdir}
D_j = \max_{i=1,\dots,m} \left( \1(T_j^i \ge \mu^i) \frac{T_j^i - \mu^i}{\vert\overline{T}^i - \mu^i\vert} + \1(T_j^i < \mu^i) \frac{\mu^i - T_j^i}{\vert\underline{T}^i - \mu^i\vert} \right).
\end{equation}

For historical reasons, $D_j$ is defined so that it takes larger values for more extreme vectors. 
The $p$-value for the single-step global test with the directional quantile measure is therefore computed differently than for the previous measures, specifically as $p^\text{qdir} = \frac{1}{s+1} \sum_{j=0}^s \1(D_j \ge D_0)$.

Another possible approximation is the studentized measure \citep{MyllymakiEtal2015, MyllymakiEtal2017}, which instead approximates the distribution using the expectation and standard deviation.
When the same distribution of $T_j^i$ is assumed for all $i$,
the unscaled test can be used as well.
The measure is then simply $U_j = \max_{i=1,\dots,m} {\vert T_j^i - \mu^i \vert}$ or, in the case of one-sided (greater) alternative, $U_j = \max_{i=1,\dots,m} T_j^i$. The unscaled test with the $F$-statistic corresponds to the $F$-max test.

\subsubsection{Envelope construction}

For any Monte Carlo test with a measure $M$ defined above, a global envelope can be constructed. A global envelope is a simultaneous band $(T_\text{low}^{\alpha} (i), T_\text{upp}^{\alpha} (i)), i\in C$,
such that the hypothesis $H_0^C$ is rejected if the observed test statistic $T_0^i < T_\text{low}^{\alpha} (i)$ or $T_0^i > T_\text{upp}^{\alpha} (i)$ for any $i\in C$. 
By inspecting $T_0^i < T_\text{low}^{\alpha} (i)$ or $T_0^i > T_\text{upp}^{\alpha} (i)$ for all $i\in C$, the global Monte Carlo test is transformed into a local test, i.e., a multiple-testing procedure. Then, we wish to draw conclusions about any hypothesis $H_{0i}, i\in C$, not only about the global test.

Two kinds of global envelope construction appear in the literature. Firstly, a global envelope was designed for the rank measure \citep{MyllymakiEtal2017}.
The rank envelope was defined by the two boundaries:
$$
T_\low^{\alpha}(i)=\min_{j = 0,\dots,s}^{k_\alpha} T_j^i, \quad  
T_\upp^{\alpha}(i)=\max_{j = 0,\dots,s}^{k_\alpha} T_j^i,
$$
where $\min_{j = 0,\dots,s}^{k_\alpha} T_j^i$ is the $k_\alpha$-th smallest of $T_j^i$, 
$\max_{j = 0,\dots,s}^{k_\alpha} T_j^i$  is the $k_\alpha$-th largest of $T_j^i$, and $k_\alpha$ was set to satisfy
\begin{equation} \label{rankenvelope}
    k_\alpha = \max \left\{k: \sum_{j=0}^{s} \1(R_j<k)\leq \alpha (s+1)\right\}.
\end{equation}

In the multiple-testing scenario, this rank envelope corresponds to the single-step resampling algorithm 2.5 of \citep{WestfallYoung1993}, apart from the inclusion of the data function in the comparison; i.e., $j=0$ is included in the formula above.  
The adjusted $p$-values for the global rank envelope test are then
\begin{equation}
  p^i=\frac{1}{s+1} \sum_{j=0}^s\1({R_j \leq R^i_0})
\end{equation}
for $i\in C$.

These adjusted $p$-values correspond to the graphical interpretation of the envelope, which, for local testing, is as follows: The hypothesis $H_{0i}$ is rejected if $T_0^i < T_\text{low}^{\alpha} (i)$ or $T_0^i > T_\text{upp}^{\alpha} (i)$. \citet{MyllymakiEtal2017} proved one-to-one correspondence between the global $p$-value \eqref{eq:global_pM} that is equal to $\min_{i\in C}p^i$ and the graphical interpretation. The one-to-one correspondence of $p^i$ and the graphical interpretation for $i\in C$ is apparent from their definitions.

The second type of envelopes appeared in \cite{MrkvickaEtal2020, MrkvickaEtal2022}, inspired by functional bands from functional data analysis. Consider a measure $M$ and let $I^\alpha = \{j\in 0,\dots, s: M_j \geq M^{(\alpha)} \}$ be the index set of the test statistics $T_j$ whose measure $M_j$ is larger than the threshold $M^{(\alpha)}$, which is chosen to be the largest value in $\{M_0,\dots,M_s\}$ for which
\begin{equation}\label{eq:Ralpha}
  \sum_{j=0}^{s} \1 \left(M_j < M^{(\alpha)}\right) \leq \alpha (s+1).
\end{equation}
That is, the set $I^\alpha$ contains the indices of the $100(1-\alpha)$\% of the least extreme test statistics $T_j$. The global envelope is defined to be
\begin{equation*}
  T_{\lo}^{\alpha}(i)= \min_{j\in I^\alpha}T_{j}^i, \quad T_{\up}^{\alpha}(i)= \max_{j\in I^\alpha}T_{j}^i.
\end{equation*}

An important feature of the functional band constructed by a functional depth is  \textit{band convexity:} A function with functional depth lower than all functions used to construct the central band must leave the central band for at least one argument $i$ of the function. \cite{NagyEtal2025} showed that the infimal depths constructed on a univariate depth are the only depths satisfying band convexity. In the setting of this paper, the infimal depth corresponds to the rank, cont and qdir measures. The area and ERL measures are  refinements of an infimal depth with respect to ties in finite samples of functions. In other words, ERL and area measures (as well as cont measure) converge to the rank measure when the number of functions increases to infinity.
\citet{MyllymakiMrkvicka2024} called this property in the finite sample setting {\it intrinsic graphical interpretation} (IGI).
Formally, the $100(1-\alpha)$\% global envelope $T_{\lo}^{\alpha}(i), T_{\up}^{\alpha}(i)$ constructed from $T_0, \ldots , T_s$ has IGI with respect to the measure $M$ if
for every $j=0, \ldots , s$, 
\begin{equation}
    T_j^i < T_{\lo}^{\alpha}(i) \text{ or } T_j^i > T_{\up}^{\alpha}(i) \text{ for some } i \in C
\text{ iff  } M_j<M^{(\alpha)}.
\end{equation}
\cite{MrkvickaEtal2022} showed that IGI is satisfied for the rank measure and all its refinements cont, ERL and area, i.e., the one-to-one correspondence between the global test and the graphical interpretation provided by functional band envelopes.

In order to distinguish between the two kinds of envelopes, we will call the second one a functional band envelope and the first one rank envelope. It is important to note that the rank envelope coincides with the functional band envelope based on the rank measure. But it is possible to define an envelope of the rank kind also for cont measure. We did not find this envelope useful for step-down construction and therefore we do not define it here. The continuous rank envelope does not coincide with functional band envelope based on the cont measure.

{\bf Directional quantile envelope (qdir)}.
Just as the directional quantile measure approximates the non-parametric distribution using a few empirical parameters, the directional quantile envelope serves as an approximation of the rank envelope. 

Because the directional quantile measure $D_j$ obtains larger values for more extreme vectors (contrary to the previously defined measures), the threshold $D^{(\alpha)}$ is defined such that it is the smallest value in $\{D_0, \dots, D_s\}$ for which $ \sum_{j=0}^{s} \1 \left(D_j > D^{(\alpha)}\right) \leq \alpha (s+1).$
The $100(1-\alpha)$\% global directional quantile envelope induced by the measure $D_j$ is then defined by the boundaries:
\begin{equation}\label{eq:qdir_env}
    T_{\lo}^{\alpha}(i) = \mu^i - D^{(\alpha)} \left\vert \underline{T}^i - \mu^i \right\vert \quad \text{and} \quad T_{\up}^{\alpha}(i) = \mu^i + D^{(\alpha)} \left\vert \overline{T}^i - \mu^i \right\vert.
\end{equation}

The one-to-one correspondence between the formal global test and this graphical interpretation was proved by \citet{MyllymakiEtal2017}, ensuring that the directional quantile envelope satisfies the IGI property.

\section{Step-down global envelope test with strong FWER control}
In this section, we introduce a step-down envelope construction algorithm based on the step-down procedure by \citet{RomanoWolf2005}. This algorithm computes adjusted $p$-values from a set of pointwise exact $p$-values obtained via resampling. Because the original rank-based approach frequently generates ties among pointwise $p$-values, it typically requires a substantial number of permutations to mitigate conservativeness. To address this limitation, we propose an extension tailored for permutation $p$-values derived from measures that inherently resolve the problem of ties, thereby significantly reducing the required number of permutations. Furthermore, we introduce a step-down FWER envelope that is plotted alongside the test statistic. This simultaneous visualization provides a rigorous graphical interpretation, allowing users to evaluate both the adjusted $p$-values (significance) and the magnitude of the test statistic at a single glance. By establishing a direct, one-to-one correspondence between the adjusted $p$-values and the envelope boundaries, the proposed framework seamlessly integrates statistical significance with effect size.

\subsection{Algorithm}\label{strongenvalgorithm}
\begin{enumerate}
    \item Compute the global envelope test with weak FWER control for $m$ hypotheses and denote 
    the global envelope test $\mathcal T_1$.
    \item Determine the most significant hypothesis $i_1 \in \{1, \ldots , m\}$.
    \item The global $p$-value of $\mathcal T_1$ determines the adjusted $p$-value $p_{i_1}$. The $100\cdot(1-\alpha)\%$ global envelope of $\mathcal T_1$, denoted by $(\widetilde T^{\alpha,\mathcal T_1, M}_\text{low}, \widetilde T^{\alpha,\mathcal T_1, M}_\text{upp})$, determines the strong FWER global envelope at $i_1$, i.e., $T^{\alpha, M}_\text{low}(i_1)= \widetilde T^{\alpha,\mathcal T_1,M}_\text{low}(i_1)$, $T^{\alpha, M}_\text{upp}(i_1)=\widetilde T^{\alpha,\mathcal T_1,M}_\text{upp}(i_1)$. 
    If $p_{i_1}>\alpha$, define the rest of the global envelope by $\mathcal T_1$, i.e., $T^{\alpha, M}_\text{low}(i_j)= \widetilde T^{\alpha,\mathcal T_1,M}_\text{low}(i_j)$, $T^{\alpha, M}_\text{upp}(i_j)=\widetilde T^{\alpha,\mathcal T_1,M}_\text{upp}(i_j)$ for $j=2, \ldots , m$, and stop the algorithm.
    \item Compute the global envelope test with weak FWER control for $m-1$ hypotheses, where the hypothesis $i_1$ is removed, and denote it 
    $\mathcal T_2$. Determine the most significant hypothesis $i_2$ of $\mathcal T_2$. 
    \item The global $p$-value of $\mathcal T_2$ determines the adjusted $p$-value $p_{i_2}$. The $100\cdot(1-\alpha)\%$ global envelope of $\mathcal T_2$, denoted by $(\widetilde T^{\alpha,\mathcal T_2, M}_\text{low}, \widetilde T^{\alpha,\mathcal T_2, M}_\text{upp})$, determines the strong FWER global envelope at $i_2$, i.e., $T^{\alpha, M}_\text{low}(i_2)= \widetilde T^{\alpha,\mathcal T_2,M}_\text{low}(i_2)$, $T^{\alpha, M}_\text{upp}(i_2)=\widetilde T^{\alpha,\mathcal T_2,M}_\text{upp}(i_2)$. 
    If $p_{i_2}>\alpha$ define the rest of the global envelope by $\mathcal T_2$, i.e., $T^{\alpha, M}_\text{low}(i_j)= \widetilde T^{\alpha,\mathcal T_2,M}_\text{low}(i_j)$, $T^{\alpha, M}_\text{upp}(i_j)=\widetilde T^{\alpha,\mathcal T_2,M}_\text{upp}(i_j)$ for $j=3, \ldots , m$, and stop the algorithm.   
    \item Repeat steps 4 and 5 until all hypotheses are removed or the algorithm is stopped. 
\end{enumerate}

\textit{Determination of the most significant hypothesis:}
The step-down algorithm requires identifying the single most significant hypothesis among the remaining unrejected set at each step. This determination is performed as follows:
\begin{itemize}
    \item For the \textit{cont} measure, the most significant hypothesis is uniquely identified as the one with the smallest pointwise continuous rank. For the \textit{qdir} measure, the most significant hypothesis is similarly uniquely identified.
    \item For the \textit{ERL} and \textit{area} measures, the primary selection is inherently determined by the extreme rank $R$. Because the extreme rank frequently generates ties, the tied hypotheses sharing the same extreme rank are decomposed into connected functional or spatial components $\mathcal{R}_r$. For the \textit{ERL} measure, the most significant hypothesis is selected from the largest connected component $\mathcal{R}_r$, and any remaining ties are resolved by centrality (i.e., the hypothesis located most centrally within $\mathcal{R}_r$ is selected). 
    \item For the \textit{area} measure, the connected components $\mathcal{R}_r$ are instead evaluated by their total contribution, defined as $\sum_{i \in \mathcal{R}_r} (R - C_i)$. The most significant hypothesis within the chosen component is then strictly determined by the maximum individual pointwise value $R - C_i$.
\end{itemize}

It is important to emphasize that for the \textit{cont} measure, determining the most significant hypothesis at each step is logically equivalent to establishing the full ordering of significance for all hypotheses at the very beginning of the procedure. Conversely, for the \textit{ERL} and \textit{area} measures, the most significant hypothesis must be evaluated dynamically at each individual step. This step-by-step determination is strictly required to maintain the validity of the one-to-one correspondence between the adjusted $p$-values and the graphical interpretation of the envelopes, as proven in Theorem \ref{11corespondence}.
For the extreme rank measure, all orderings produce equivalent tests.

Appendix \ref{app:implementation} describes an efficient implementation of Algorithm \ref{strongenvalgorithm}.

\subsection{Adjusted $p$-values and adjusted subset $p$-values}

In order to compute adjusted $p$-values for all $i$, the algorithm can be computed for all hypotheses, without stopping. The envelope must be computed with the stopping rule described in Algorithm \ref{strongenvalgorithm} above, only the adjusted $p$-values can be updated. When all the adjusted $p$-values are calculated, it is desirable to apply a monotonicity adjustment, i.e.: if $p_i<p_{i-1}$ for any $i>1$, the adjusted $p_i$ is set to $p_{i-1}$. This guarantees that the $p$-values do not cross the significance threshold after the envelope construction has stopped.

The set of all hypotheses $C$ is often partitioned into $L$ subsets, $C=\cup_{l=1}^L K_l$ and the global $p$-value corresponding to a subset $K_l$ is required. For this purpose we define the adjusted subset $p$-value for a subset of hypotheses $K$ as $\widetilde p^K=\min_{i\in K}p^i$.

\section{Theoretical results}

\begin{definition}
The test statistics $(T_0, \ldots , T_s) \in S^{s+1}$ are called exchangeable if their joint probability distribution 
is permutation invariant, i.e.,
$$Pr\{(T_0, \ldots , T_s) \in A\}=Pr\{(T_{\sigma (0)}, \ldots , T_{\sigma (s)}) \in A\}$$
for any measurable set $A \subset S^{s+1}$ and any permutation $\sigma$.
\end{definition}

Assuming exchangeability under the complete null hypothesis $H_0^C$, 
the Monte Carlo tests based on any measure $M$ that provides a strict ordering of the test statistics are global tests with exact control \citep{Barnard1963, MyllymakiEtal2017} (with the additional condition that $\alpha(s+1)$ must be an integer). 
Thus, in the multiple testing scenario, these tests satisfy the weak FWER control by definition. 

A critical concept for establishing strong FWER control in resampling-based multiple testing is the subset pivotality condition \citep{WestfallYoung1993}. 

\begin{definition}[Subset Pivotality]
The subset pivotality condition holds if the joint distribution of the test statistics for any subset of true null hypotheses $K \subseteq C$ is identical under the full intersection null hypothesis $H_0^C$ and under the partial null hypothesis $H_0^K$.
\end{definition}

In practice, this condition implies that the truth or falsehood of hypotheses outside $K$ does not affect the joint distribution of the test statistics within $K$. When this condition is satisfied, a single, global resampling procedure performed under $H_0^C$ validly simulates the distribution for any subset $K$.

To analyse the behaviour of our proposed measures under subset evaluations, it is useful to categorize them based on how they aggregate information across a given domain. A critical requirement for establishing exact strong FWER control in step-down procedures (without assuming subset pivotality) \citep{RomanoWolf2005} is the structural monotonicity of the measures when stepping down from a larger set to a smaller subset.

\begin{definition}[Pointwise minimum/maximum statistic]
A global measure $M$ is called a pointwise minimum (or maximum) statistic if, for any subset $K \subseteq C$, the global measure over $K$ is defined purely as $M_j = \min_{i \in K} m_j^i$ (or $\max_{i \in K} m_j^i$), where $m_j^i$ is a pointwise value computed solely from the $i$-th component across the simulations $j = 0, \ldots , s$.
\end{definition}

The pointwise continuous rank ($C_j^i$), extreme rank ($R_j^i$), and directional quantile ($D_j^i$) measures strictly satisfy this property. For a restricted subset $K \subseteq C$, their global measures are defined simply as $C_j = \min_{i \in K} C_j^i$, $R_j = \min_{i \in K} R_j^i$, and $D_j = \max_{i \in K} D_j^i$. Because these global measures are derived exclusively through a minimum (or maximum) operation over the subset $K$, restricting the evaluation to a further subset $I \subset K$ structurally guarantees that the new measure will be greater than or equal to (or less than or equal to, for the maximum) the original measure. This intrinsic mathematical property is what naturally enforces the monotonicity of critical values required for strong FWER control.

Conversely, the ERL and area measures do not satisfy this property since their global measures are not constructed via a simple pointwise minimum or maximum. Instead, they evaluate extremity by ranking the entire sub-vectors or functions over $K$. Restricting the evaluation to a smaller subset $I \subset K$ alters the sub-vectors, which can fundamentally change their structural ordering. Consequently, the required monotonicity when moving from $K$ to $I$ is not strictly guaranteed, preventing exact strong FWER control in finite samples.

\begin{remark}
It is important to emphasize the distinction between the underlying pointwise measure and the construction of the functional envelope. While the pointwise continuous rank $C_j^i$ relies exclusively on marginal information from the $i$-th component, the functional band envelope for the cont measure (as well as ERL and area) determines its central index set $I^\alpha$ by evaluating the global extremity of the functions over the entire domain $C$. Consequently, an extreme deviation of a function at a component outside a proper subset $K \subset C$ can cause that function to be excluded from $I^\alpha$. This exclusion subsequently alters the bounding vectors $(T_{\text{low}}, T_{\text{upp}})$ of the envelope within $K$. Because the envelope boundaries within $K$ become dependent on the behaviour of the functions outside $K$, this construction violates the structural isolation required for exact strong FWER control in a single-step procedure.
\end{remark}

\begin{theorem} \label{strongcontrol_singlestep}
Assume the subset pivotality condition holds for the joint distribution of the test statistics 
$T_j=(T_j^i, i \in C)$, $j=0,\ldots, s$. 
The proposed single-step global envelope test achieves exact strong FWER control for the rank and qdir measures $M$.
\end{theorem}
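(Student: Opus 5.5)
The argument is the classical Westfall--Young reduction for single-step max/min procedures: compare the envelope over the full family $C$ with the envelope that would be built from the true nulls alone. Fix an arbitrary configuration in which exactly the hypotheses in $K\subseteq C$ are true. We must show that $\mathbb P(V\ge 1\mid H_0^K)\le\alpha$. Here $V\ge 1$ means that $T_0^i$ leaves the single-step envelope $(T^\alpha_\lo(i),T^\alpha_\up(i))$ for some $i\in K$. Equivalently, by the IGI property (or directly from the definition of adjusted $p$-values), $p^i\le\alpha$ for some $i\in K$.

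\textbf{Step 1 (monotonicity).} Because the rank and qdir measures are pointwise minimum/maximum statistics, the measure restricted to $K$ is no more extreme than the measure over $C$. Write $R_j^{K}=\min_{i\in K}R_j^i$ and $R_j^C=\min_{i\in C}R_j^i$, so that $R_j^K\ge R_j^C$ for every $j$. Analogously, $D_j^K=\max_{i\in K}D_j^i\le D_j^C$.

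For rank, the adjusted $p$-value is $p^i=\frac{1}{s+1}\sum_j \1(R_j^C\le R_0^i)$. For $i\in K$ we have $R_0^i\ge R_0^K$, and $\1(R_j^C\le R_0^i)\ge \1(R_j^K\le R_0^K)$ because $R^C_j\le R^K_j$. Hence
\[
\min_{i\in K}p^i\ \ge\ p^{K}:=\frac{1}{s+1}\sum_{j=0}^s\1(R_j^K\le R_0^K).
\]
For qdir the same inequality holds with the inequalities reversed. Here I will take the pointwise parameters $\mu^i,\overline T^i,\underline T^i$ to be computed from the $i$-th coordinates only, so that $D_j^i$ depends on coordinate $i$ alone. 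Equivalently, the envelope formulation gives the same conclusion: the threshold $D^{(\alpha)}$ computed from $\{D^C_j\}$ is at least the threshold computed from $\{D^K_j\}$. So the global envelope restricted to $K$ contains the envelope that would be built from the coordinates in $K$ alone. The same holds for the rank envelope, since $k_\alpha$ computed from $R^C$ is at most $k_\alpha$ computed from $R^K$. Consequently the event $\{V\ge1\}$ is contained in the event that the single-step Monte Carlo test restricted to $K$ rejects.

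\textbf{Step 2 (exactness on $K$).} The test based on $p^K$ is a Monte Carlo test of $H_0^K$ using only the coordinates in $K$. Subset pivotality says that the joint law of $(T_j^i, i\in K)_{j=0}^s$ under $H_0^K$ equals its law under $H_0^C$. Under $H_0^C$ the vectors $T_0,\dots,T_s$ are exchangeable, so their restrictions to $K$ are exchangeable as well. By the exactness of Monte Carlo tests under exchangeability (Barnard; the weak FWER result recalled above), $\mathbb P(p^K\le\alpha\mid H_0^C)\le\alpha$, with equality when $\alpha(s+1)$ is an integer and there are no ties. The inequality is conservative when ties occur in $R^K$. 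Transferring this through subset pivotality gives
\[
\mathbb P(V\ge1\mid H_0^K)\le \mathbb P(p^K\le\alpha\mid H_0^K)=\mathbb P(p^K\le\alpha\mid H_0^C)\le\alpha.
\]
Taking the maximum over $K$ yields strong FWER control.

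\textbf{Main obstacle.} The delicate point is Step 1's requirement that the pointwise quantities depend only on coordinate $i$ across the resamples $j=0,\dots,s$. This is what makes the restricted measure a function of the $K$-coordinates alone, so that subset pivotality applies. For rank this holds automatically. For qdir it must be checked that $\mu^i,\overline T^i,\underline T^i$ are computed from the $i$-th marginal, including $T_0^i$, and are symmetric in $j$, so that exchangeability is preserved. A second point to verify is that ties in $R$ only make the rank test conservative and never anti-conservative. This follows because the $p$-value counts $\le$, so the standard argument gives $\mathbb P(p\le\alpha)\le\alpha$ even with ties.
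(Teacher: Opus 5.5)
Your proof is correct and takes the same route as the paper: the Westfall--Young single-step argument under subset pivotality, combined with the pointwise min/max structure of the rank and qdir measures, which the paper cites and you write out in full. Your Step 1 is also what makes the paper's ``structural isolation'' remark rigorous, because the single-step threshold ($k_\alpha$ or $D^{(\alpha)}$) is computed over all of $C$, so it is only dominated by the threshold built from the coordinates in $K$, not equal to it.
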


\begin{proof}
\citet{WestfallYoung1993} proved that the single-step resampling procedure achieves strong FWER control under the condition of subset pivotality. This condition ensures that the joint distribution of the test statistics within any subset of true null hypotheses $K \subseteq C$ remains unaffected by the truth or falsehood of hypotheses outside $K$. 

Because the rank and qdir measures are pointwise minimum (or maximum) statistics, their evaluation on any subset $K$, and the construction of their corresponding rank-style envelopes, depends solely on the marginal behaviour within $K$. They are structurally isolated. Consequently, when the underlying data generation mechanism satisfies subset pivotality, this structural isolation ensures that the test itself strictly maintains the pivotal property, guaranteeing exact strong FWER control.

\end{proof}

As noted in the preceding remark, the functional band envelopes for cont, ERL, and area integrate global domain information during the construction of the central set $I^\alpha$. This mechanism breaks the structural isolation. Therefore, even if the underlying test statistics satisfy subset pivotality, the single-step variants of these functional band tests fail to maintain it procedurally, and thus do not strongly control the FWER.

\begin{theorem} \label{strongcontrol_stepdown}
Assume exchangeability of the joint distribution of the test statistics 
$T_j=(T_j^i, i \in C)$, $j=0,\ldots, s$, under the complete null hypothesis $H_0^C$. 
If a global measure $M$ is a pointwise minimum (or maximum) statistic, the proposed step-down algorithm \ref{strongenvalgorithm} achieves exact strong FWER control without requiring the subset pivotality condition. Consequently, the step-down algorithm achieves exact strong FWER control for the rank, cont, and qdir measures. For the ERL and area measures, the algorithm achieves asymptotic strong FWER control as the number of permutations increases.  
\end{theorem}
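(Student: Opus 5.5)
We follow the closure/step-down argument of \citet{RomanoWolf2005}, adapted to resampling-based global tests. Fix an arbitrary configuration of true hypotheses $K\subseteq C$, with $K\neq\emptyset$, and let $E$ be the event that at least one $H_{0i}$, $i\in K$, is rejected by Algorithm \ref{strongenvalgorithm}. The goal is to show $\mathbb P(E)\le\alpha$ under any data-generating distribution in which $H_0^K$ holds.

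\textbf{First key step.} Consider the first step $k$ at which the algorithm rejects a true hypothesis $i_k\in K$. Before step $k$, only false hypotheses have been removed, so the current index set $J_k$ satisfies $K\subseteq J_k$. The rejection at step $k$ means the global measure over $J_k$, $M_0^{J_k}=\min_{i\in J_k} m_0^i$ (or the max version for qdir), is among the $\alpha(s+1)$ most extreme of $\{M_j^{J_k}\}_j$. Because $i_k$ is the most significant hypothesis in $J_k$ and lies in $K$, we have $M_0^{J_k}=m_0^{i_k}\ge M_0^K$. By the pointwise-minimum property, $M_j^{J_k}\le M_j^{K}$ for all $j$. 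Hence the critical threshold over $J_k$ is no larger than the threshold over $K$. This is the Romano--Wolf monotonicity of critical values. Combining the two inequalities, $E$ implies that the single global Monte Carlo test restricted to $K$, with statistic $M^K$, rejects. Under ties, the pointwise ranks and the ordering for cont require care: I would use the same tie convention in both $J_k$ and $K$ so that the inclusion of events holds exactly.

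\textbf{Second key step.} Bound the probability of this restricted global test. This is where the absence of subset pivotality must be handled. Here I would use the fact that the resampling scheme makes $(T_0^K,\dots,T_s^K)$ exchangeable whenever $H_0^K$ holds, i.e.\ the exchangeability needed is only of the marginal subvectors on true hypotheses. This is the standard randomization hypothesis of Romano and Wolf, and I expect the main obstacle is justifying that the theorem's assumption, stated under $H_0^C$, delivers exchangeability on $K$ under $H_0^K$. I would argue it from the construction of the pointwise statistics $m_j^i$, which depend only on the $i$-th components, together with invariance of the resampling group acting on the data for components in $K$. Given exchangeability of the restricted vectors and a strict ordering (cont and qdir are a.s.\ tie-free; for rank, the conservative definition of $k_\alpha$ in \eqref{rankenvelope} handles ties), Barnard's argument gives $\mathbb P(M_0^K \text{ among the } \alpha(s+1)\text{ most extreme})\le\alpha$, with equality when $\alpha(s+1)$ is an integer. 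Taking the maximum over $K$ yields strong control for rank, cont and qdir. The envelope statement then follows because, by the IGI correspondence established in Theorem \ref{11corespondence}, envelope exceedances and adjusted $p$-values $\le\alpha$ coincide.

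\textbf{ERL and area.} For ERL and area, monotonicity fails exactly in finite samples. The plan is to use that these measures are tie-breaking refinements of the extreme rank $R$, and that $R$ itself is a pointwise minimum statistic. As $s\to\infty$, the probability that the ordering by ERL or area disagrees with the ordering by $R$ at the critical level vanishes, because the relative rank $R/(s+1)$ converges to a continuous pointwise-$p$-value minimum with no ties. Hence the rejection events differ from those of the rank step-down test on a set of probability tending to $0$, which gives asymptotic strong control.
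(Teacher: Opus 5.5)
Your first step is correct and is the paper's monotonicity argument written out. In fact $M_0^{J_k}=M_0^K$, since the minimum over $J_k$ is attained at $i_k\in K$; together with $M_j^{J_k}\le M_j^K$ this gives $p^K\le p^{J_k}\le\alpha$.

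The genuine gap is the one you flagged in the second step, and it cannot be closed the way you propose. Bounding $\mathbb P(p^K\le\alpha)$ requires $(T_0^K,\dots,T_s^K)$ to be exchangeable under \emph{every} distribution satisfying $H_0^K$, including those where hypotheses outside $K$ are false. Exchangeability under $H_0^C$ says nothing about such distributions. ``Invariance of the resampling group on the components in $K$'' does not follow from the $m_j^i$ depending only on the $i$-th components; it is an extra hypothesis.

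Here is a concrete failure. Take a two-sample permutation design with $m=2$. Component 1 is $\varepsilon^1$ in group 1 and $(1+\delta)\varepsilon^1$ in group 2. Component 2 is $\varepsilon^2$ in group 1 and $\varepsilon^2+\delta$ in group 2. The errors are symmetric with mean zero, and $H_{0i}$ states equal means of component $i$.
\begin{itemize}
\item At $\delta=0$ all observations are i.i.d., so the stated assumption holds.
\item Take large $\delta>0$ and large samples, with group 2 the smaller group. Then $H_{01}$ is still true, and the step-down removes $H_{02}$ first. It then tests $H_{01}$ alone by permuting an unstudentized mean difference under unequal variances, with the larger variance in the smaller group. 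That test is liberal, so the FWER exceeds $\alpha$.
\end{itemize}
You therefore have to \emph{assume} the Romano--Wolf randomization hypothesis: for every $K\subseteq C$, $(T_0^K,\dots,T_s^K)$ is exchangeable whenever $H_0^K$ holds. With it your argument is complete.

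The paper's proof relies on this condition tacitly. Romano--Wolf's result requires, besides monotonicity, that the critical value for the true set $K$ control the level under $H_0^K$. This condition concerns invariance of the resampling distribution on $K$, not the law of $T^K$ being the same under $H_0^K$ and $H_0^C$. So ``without subset pivotality'' remains a meaningful claim.

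For ERL and area, your limit argument is heuristic. It needs the limiting $p$-value to put no mass near $\alpha$, and for a permutation scheme with fixed $n$ the limit is discrete. You can reuse your first step instead:
\begin{itemize}
\item Both measures refine the extreme rank. $R_j<R_0$ forces $E_j<E_0$, and also $A_j<A_0$, because $R_0-1<(s+1)A_0\le R_0$ (using $R_j=\lceil\min_i C_j^i\rceil$).
\item The hypothesis they select at each step is always an argmin of $R_0^i$ over the current set.
\item Hence a first false rejection implies $\#\{j:R_j^K<R_0^K\}\le\alpha(s+1)$.
\item Without ties, at most $2|K|$ indices share the value $R_0^K$, so $\#\{j:R_j^K\le R_0^K\}\le\alpha(s+1)+2|K|$.
\end{itemize}
Under exchangeability on $K$, $\mathbb P(\#\{j:R_j^K\le R_0^K\}\le c)\le c/(s+1)$. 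This gives FWER $\le\alpha+2m/(s+1)$, an explicit finite-$s$ version of the asymptotic claim under the same added assumption.
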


\begin{proof}
\citet{RomanoWolf2005} demonstrated that step-down procedures guarantee strong FWER control without assuming subset pivotality if the critical values used at each step satisfy a monotonicity condition. Specifically, the critical threshold for evaluating any set of hypotheses $K$ must be at least as stringent as the critical threshold for any of its subsets $I \subset K$. In the absence of specific distributional assumptions, \citet{RomanoWolf2005} enforce this artificially by applying the closure principle.

However, this computationally intensive closure principle is unnecessary if the test measure itself provides the required monotonicity. A pointwise minimum statistic over a set $K$ is computed simply as $\min_{i \in K} m_j^i$. Naturally, the minimum over a large set $K$ is always less than or equal to the minimum over any of its smaller subsets $I \subset K$. (An analogous rule holds for the maximum).

In our step-down procedure, the envelope is recalculated at each step using only the remaining unrejected hypotheses $K$. Because the rank, cont, and qdir measures are pointwise minimum (or maximum) statistics, their critical thresholds $M^{(\alpha)}$ automatically satisfy this monotonicity. Consequently, the procedure directly fulfills the requirement of \citet{RomanoWolf2005} and achieves exact strong FWER control without needing the subset pivotality assumption.

Finally, because the ERL and area measures asymptotically converge to the rank measure, they inherit this structural monotonicity asymptotically, thereby achieving asymptotic strong control within this same step-down framework.
\end{proof}

Theorem \ref{strongcontrol} specifies the strong FWER control of {\it subset global $p$-values}.

\begin{theorem} \label{strongcontrol}
Assume that the domain $C$ is partitioned (or covered) by $L$ subsets, $C=\cup_{l=1}^L K_l$.
Under the assumptions of Theorem \ref{strongcontrol_stepdown}, the proposed step-down algorithm \ref{strongenvalgorithm} provides adjusted $p$-values $p^i$ for each hypothesis $i=1,\ldots, m$. 
We define the subset global $p$-value for the $l$-th subset of hypotheses as $\widetilde p^l=\min_{i\in K_l}p^i$.
The subset global $p$-values achieve exact strong FWER control for the rank, cont, and qdir measures $M$ for every intersection hypothesis $H_0^{K_l}, l=1, \ldots ,L$. 
For the ERL and area measures $M$, the subset global $p$-values achieve asymptotic strong FWER control as the number of permutations increases.  
\end{theorem}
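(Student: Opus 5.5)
The plan is to reduce the claim to Theorem \ref{strongcontrol_stepdown} by a simple event-inclusion argument. Fix an arbitrary configuration in which the set of true null hypotheses is $K_0\subseteq C$. A false rejection of an intersection hypothesis $H_0^{K_l}$ occurs exactly when $K_l\subseteq K_0$ (so that $H_0^{K_l}$ is true) and $\widetilde p^l=\min_{i\in K_l}p^i\le\alpha$. That is, some $i\in K_l$ has $p^i\le\alpha$. Since $K_l\subseteq K_0$, this $i$ is a true null hypothesis rejected by the individual step-down procedure.

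The key step is therefore the inclusion
\begin{equation*}
\bigcup_{l:\,K_l\subseteq K_0}\{\widetilde p^l\le\alpha\}\subseteq\bigcup_{i\in K_0}\{p^i\le\alpha\}=\{V\ge 1\},
\end{equation*}
where $V$ counts false individual rejections by Algorithm \ref{strongenvalgorithm}. Taking probabilities under $H_0^{K_0}$ gives
\begin{equation*}
\mathbb P(\text{some true } H_0^{K_l}\text{ rejected}\mid H_0^{K_0})\le\mathbb P(V\ge1\mid H_0^{K_0}).
\end{equation*}
By Theorem \ref{strongcontrol_stepdown}, the right-hand side is at most $\alpha$ for the rank, cont and qdir measures. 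Maximising over $K_0$ gives exact strong FWER control over the family $\{H_0^{K_l}\}$. The argument never uses disjointness, so it works for a covering as well as a partition.

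Before this I need to confirm that the adjusted $p$-values used here are the ones whose rejections Theorem \ref{strongcontrol_stepdown} controls. This comes down to two checks.
\begin{itemize}
\item The monotonicity adjustment makes $\{p^i\le\alpha\}$ coincide with the set rejected by the step-down envelope before the stopping rule fires. Hypotheses processed after the first step with $p_{i_k}>\alpha$ inherit $p$-values above $\alpha$ and so are not rejected.
\item The extra $p$-values computed after stopping therefore create no new rejections.
\end{itemize}
For the ERL and area measures, the same inclusion holds for every number of permutations $s$. The asymptotic bound $\limsup_s \mathbb P(V\ge1\mid H_0^{K_0})\le\alpha$ from Theorem \ref{strongcontrol_stepdown} then transfers directly.

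The main obstacle is the identification in the previous paragraph: the event $\{p^i\le\alpha\}$ must equal the rejection event of the procedure whose error rate Theorem \ref{strongcontrol_stepdown} bounds. Once that holds, the remaining argument is the union inclusion above and is routine.
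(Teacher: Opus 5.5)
Your proposal is correct and follows essentially the same route as the paper: it includes the event that some true $H_0^{K_l}$ has $\widetilde p^l\le\alpha$ in the event $\{\exists i\in K: p^i\le\alpha\}$, then bounds the latter by $\alpha$ (asymptotically for ERL and area) via Theorem \ref{strongcontrol_stepdown}. Your explicit check, that the monotonicity-adjusted $p$-values $\{p^i\le\alpha\}$ reproduce exactly the step-down rejection set, is a detail the paper leaves implicit and is a worthwhile addition.
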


\begin{proof}
By Theorem \ref{strongcontrol_stepdown}, the proposed step-down procedure achieves exact (or asymptotic) strong FWER control over the entire domain $C$. Therefore, the probability of falsely rejecting at least one true null hypothesis $H_{0i}$ is bounded by $\alpha$. Formally, if $K \subseteq C$ denotes the subset of all true null hypotheses, then $\mathbb{P}(\exists i \in K: p^i \leq \alpha \mid H_0^K) \leq \alpha$. 

By definition, the subset global $p$-value satisfies $\widetilde p^l \leq \alpha$ if and only if there exists at least one $i \in K_l$ such that $p^i \leq \alpha$. Consequently, the event that at least one completely true subset hypothesis $H_0^{K_l}$ is falsely rejected (meaning $\widetilde p^l \leq \alpha$ when all hypotheses in $K_l$ are true) implies that at least one true individual hypothesis within those subsets is falsely rejected. However, the converse is not necessarily true, because a true individual hypothesis might belong to a subset that also contains false hypotheses (making the subset hypothesis false overall). Therefore, the probability of falsely rejecting a true subset is bounded by the probability of falsely rejecting a true individual hypothesis:
$$ \mathbb{P}(\exists l \text{ such that } H_0^{K_l} \text{ is true and } \widetilde p^l \leq \alpha \mid H_0^K) \leq \mathbb{P}(\exists i \in K: p^i \leq \alpha \mid H_0^K) \leq \alpha. $$
Thus, the subset global $p$-values achieve strong FWER control for the family of subset hypotheses $H_0^{K_l}, l=1, \ldots ,L$. 
\end{proof}

The following Theorem specifies the one-to-one correspondence of single-step global envelopes.

\begin{theorem} \label{11corespondenceonestep}
Assume that, with probability 1, there are no pointwise ties among $T_j^i$ for $j= 0,\ldots, s$. Then the single-step global envelope test achieves a one-to-one correspondence between the global $p$-value and its graphical interpretation for all considered measures $M$ (rank, cont, ERL, area, and qdir). That is, $p \leq \alpha$ if and only if the observed test statistic falls outside the global envelope for at least one hypothesis. Furthermore, if $p \leq \alpha$, the test statistic strictly exceeds the envelope boundaries at the most significant hypothesis.
\end{theorem}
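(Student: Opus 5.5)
The plan is to reduce the statement to the IGI property and then handle the measures in two groups. For the functional band envelopes (rank, cont, ERL, area), IGI was established by \citet{MrkvickaEtal2022}. It says that for every $j$ the statistic $T_j$ leaves the band $(T_{\lo}^{\alpha},T_{\up}^{\alpha})$ at some $i\in C$ iff $M_j<M^{(\alpha)}$. For qdir, the analogous statement with $D_j>D^{(\alpha)}$ was proved by \citet{MyllymakiEtal2017}.

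\textbf{Step 1: linking $p\leq\alpha$ to the threshold.} Take $j=0$. We show that $p^M\leq\alpha$ iff $M_0<M^{(\alpha)}$. If $M_0<M^{(\alpha)}$, then every $j$ with $M_j\leq M_0$ satisfies $M_j<M^{(\alpha)}$. Hence $(s+1)p^M\leq\sum_j \1(M_j<M^{(\alpha)})\leq\alpha(s+1)$ by \eqref{eq:Ralpha}. Conversely, suppose $M_0\geq M^{(\alpha)}$. Since $M^{(\alpha)}$ is the largest admissible value in $\{M_0,\dots,M_s\}$, the value $M_0$ is either $M^{(\alpha)}$ itself or a larger, non-admissible value. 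In the second case $\sum_j\1(M_j<M_0)>\alpha(s+1)$. In the first case we need the no-ties assumption. Because there are no pointwise ties, the cont and area measures (and ERL, through lexicographic ordering) produce almost surely distinct $M_j$. Then $\sum_j \1(M_j\leq M_0)=\sum_j\1(M_j<M_0)+1$, which exceeds $\alpha(s+1)$ when $\alpha(s+1)$ is an integer. For the rank measure, ties in $R_j$ are possible even without pointwise ties, so this case must be checked directly against \eqref{rankenvelope}. The check uses $p^i=\frac1{s+1}\sum_j\1(R_j\leq R_0^i)$ and the definition of $k_\alpha$ via $\1(R_j<k)$; the strict-versus-nonstrict inequalities there are consistent with the band being the $k_\alpha$-th extremes. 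For qdir the argument is the same with the inequalities reversed. Combining Step 1 with IGI at $j=0$ gives the first claim.

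\textbf{Step 2: the most significant hypothesis.} Let $i^*$ be the most significant hypothesis, which attains the minimum (or maximum) in the measure.
\begin{itemize}
\item For rank and cont, $M_0=m_0^{i^*}$. If $M_0<M^{(\alpha)}$, then $T_0^{i^*}$ is strictly more extreme at $i^*$ than every $T_j^{i^*}$ with $j\in I^\alpha$. This holds because each $j\in I^\alpha$ has $m_j^{i^*}\geq M_j\geq M^{(\alpha)}>m_0^{i^*}$. Monotonicity of the pointwise (continuous) rank in the value, which is strict when there are no ties, then puts $T_0^{i^*}$ strictly outside $[\min_{I^\alpha}T_j^{i^*},\max_{I^\alpha}T_j^{i^*}]$.
\item For qdir, $D_0=D_0^{i^*}>D^{(\alpha)}$ gives strict exceedance of \eqref{eq:qdir_env} directly at $i^*$.
\item For ERL and area, $i^*$ is chosen among the points where $R_0^i=R_0$. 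When $M_0<M^{(\alpha)}$, I will show that $R_0\leq\min_{j\in I^\alpha}R_j$, with the ordering refinement breaking ties. Then, at any point attaining the extreme rank, $T_0$ is at least as extreme as all band curves; with no ties it is strictly beyond them, since equality of ranks would force the curve $T_0$ itself into $I^\alpha$.
\end{itemize}

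The main obstacle is the ERL/area case together with the rank-tie bookkeeping in Step 1. Neither measure is a pointwise minimum, so showing that exceedance happens specifically at the selected $i^*$ and not merely somewhere requires the following. Any $j\in I^\alpha$ has $\mathbf R_j\succeq\mathbf R_0$ lexicographically, so its minimal rank is at least $R_0$. Equality at every tied point of $T_0$ would need the first ordered ranks to coincide, which the no-ties assumption rules out at the specific coordinate $i^*$.
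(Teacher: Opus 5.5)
Your overall route matches the paper's. The ``iff'' part is IGI at $j=0$; the paper simply cites \citet{MrkvickaEtal2022} and \citet{MyllymakiEtal2017} for this, while you try to make the link $p\le\alpha \iff M_0<M^{(\alpha)}$ self-contained. The ``furthermore'' part rests on $R_0\le k:=\min_{j\in I^\alpha}R_j$. Two of your justifications are wrong as written, however.

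\textbf{Step~1, boundary case.} You handle $M_0=M^{(\alpha)}$ by claiming that the absence of pointwise ties makes the $M_j$ almost surely distinct. This is false for ERL. With two-sided ranks and $m=1$, the minimum and maximum curves both have sorted rank vector $(1)$. With one-sided ranks and $m=2$, curves with pointwise ranks $(1,2)$ and $(2,1)$ share the sorted vector $(1,2)$, hence the same $E_j$. Both configurations have positive probability under continuous data. For cont and area the stated hypothesis does not imply distinctness either, and for rank you leave the check undone. No tie assumption is needed here. If $M_0=M^{(\alpha)}$, then either $M^{(\alpha)}=\max_j M_j$, so $p^M=1>\alpha$, or the next larger observed value $M'$ is inadmissible. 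In the latter case $(s+1)p^M=\sum_j\mathbf{1}(M_j\le M^{(\alpha)})=\sum_j\mathbf{1}(M_j<M')>\alpha(s+1)$. This covers all measures uniformly, with reversed inequalities for qdir, and needs no integrality of $\alpha(s+1)$.

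\textbf{Step~2, ERL and area.} The inequality $R_0\le\min_{j\in I^\alpha}R_j$ is correct, since both measures refine the extreme-rank order. Your reasons for \emph{strict} exceedance do not hold, though: you say equality of ranks ``would force $T_0$ into $I^\alpha$'', and that no ties rules out coinciding first ordered ranks. A curve $j\in I^\alpha$ can perfectly well have $R_j=R_0$, with ERL separating $\mathbf R_j\succ\mathbf R_0$ at a later position. With two-sided ranks, the mirror-image curve in the opposite tail has $R_j^{i^*}=R_0^{i^*}$. The no-ties assumption constrains values, not ranks.

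The missing step is the one in the paper's proof. Suppose $T_0^{i^*}$ is the $r$-th smallest value at $i^*$, with $r=R_0^{i^*}=R_0\le k$. Every curve lying below it at $i^*$ has pointwise, hence extreme, rank $<r\le k$, so it is excluded from $I^\alpha$. Since also $0\notin I^\alpha$ and values are distinct, every $j\in I^\alpha$ satisfies $T_j^{i^*}>T_0^{i^*}$, i.e.\ $T_0^{i^*}<T^{\alpha}_{\mathrm{low}}(i^*)$.

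Your arguments for rank, cont and qdir are essentially fine. For two-sided ranks, the ``monotonicity in the value'' should be replaced by exactly this ``curves beyond it have smaller rank'' reasoning.
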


\begin{proof}
The one-to-one correspondence between the single-step global envelope test and its graphical interpretation was proven in Theorem 1 of \citet{MrkvickaEtal2022} for the cont, ERL, and area measures, and by \citet{MyllymakiEtal2017} for the rank and  qdir measure.

To demonstrate the behaviour at the most significant hypothesis, let $I^\alpha$ denote the index set of the central test statistics used to construct the envelope. For the rank-based measures, there exists a threshold extreme rank $k$ such that $R_j \geq k$ for all $j \in I^\alpha$, and $R_j \leq k$ for all $j \notin I^\alpha$ (equality can occur in both sets because ties at the raw rank $k$ may be split between $I^\alpha$ and its complement by the refined measures). Let $i$ denote the index of the most significant hypothesis. 
If the global test rejects ($p \leq \alpha$), the observed test statistic is not included in the central set ($0 \notin I^\alpha$), which implies its global extreme rank satisfies $R_0 \leq k$. Because $R_0 = \min_{i \in C} R_0^i$, the pointwise rank at the most significant hypothesis $i$ must satisfy $R_0^i \leq k$. Since $R_0^i \leq k$ and $0 \notin I^\alpha$, the observed value $T_0^i$ must fall outside the constructed envelope boundaries $(T^{\alpha}_\text{low}(i), T^{\alpha}_\text{upp}(i))$. An analogous argument holds for the qdir measure using its corresponding threshold $D^{(\alpha)}$ and reversed inequalities.
\end{proof}

The following theorem establishes the one-to-one correspondence for the step-down global envelopes.

\begin{theorem} \label{11corespondence}
Assume that, with probability 1, there are no pointwise ties among $T_j^i$ for $j= 0,\ldots, s$. Then the proposed step-down algorithm \ref{strongenvalgorithm} achieves a one-to-one correspondence between the adjusted $p$-values and the graphical interpretation provided by the envelope for all considered measures $M$ (rank, cont, ERL, area, and qdir). That is, $p_i \leq \alpha$ if and only if the observed test statistic $T_0^i$ falls outside the strong FWER envelope $(T^{\alpha,M}_\low(i), T^{\alpha,M}_\upp(i))$. 
\end{theorem}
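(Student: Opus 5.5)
The plan is to reduce the step-down statement to the single-step correspondence of Theorem \ref{11corespondenceonestep}, applied to each intermediate test $\mathcal T_k$ of Algorithm \ref{strongenvalgorithm}. We fix the order $i_1, i_2, \ldots$ produced by the algorithm. Let $k^*$ be the first step at which the global $p$-value of $\mathcal T_{k^*}$ exceeds $\alpha$; set $k^*=m+1$ if this never happens. By construction, the envelope at $i_k$ is the envelope of $\mathcal T_k$ for $k\le k^*$, and the envelope at $i_j$ for every $j>k^*$ is the envelope of $\mathcal T_{k^*}$. We then check the equivalence "$p_{i}\le\alpha$ iff $T_0^{i}$ lies outside $(T^{\alpha,M}_\low(i),T^{\alpha,M}_\upp(i))$" separately in three regimes: $i=i_k$ with $k<k^*$, $i=i_{k^*}$, and $i=i_j$ with $j>k^*$. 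Throughout we use that the adjusted $p$-values are made monotone, so $p_{i_k}=\max_{k'\le k}p^{\mathcal T_{k'}}$, where $p^{\mathcal T_{k'}}$ is the global $p$-value of $\mathcal T_{k'}$.

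\emph{Regime $k<k^*$.} Here $p^{\mathcal T_{k'}}\le\alpha$ for all $k'\le k$, so $p_{i_k}\le\alpha$. Since $\mathcal T_k$ rejects, the second part of Theorem \ref{11corespondenceonestep}, applied to the reduced family $C\setminus\{i_1,\ldots,i_{k-1}\}$, says that $T_0$ strictly exceeds the envelope of $\mathcal T_k$ at its most significant hypothesis. That hypothesis is $i_k$. The hypothesis of no ties carries over to the sub-family, and every measure (rank, cont, ERL, area, qdir) is simply recomputed on the reduced index set, so Theorem \ref{11corespondenceonestep} applies verbatim. The main care needed is that "most significant hypothesis" in Theorem \ref{11corespondenceonestep} must coincide with the selection rule of the algorithm. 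For cont and qdir this holds because both pick the minimal pointwise continuous rank or the maximal pointwise $D_0^i$. For ERL and area we will argue that the selected point lies in a component attaining the extreme rank $R_0$, so $R_0^{i_k}=R_0\le k_{\rm thr}$. The argument in the proof of Theorem \ref{11corespondenceonestep} uses only $R_0^{i}\le k_{\rm thr}$ and $0\notin I^\alpha$, so it still applies.

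\emph{Regimes $k=k^*$ and $j>k^*$.} Here $p_{i_j}\ge p^{\mathcal T_{k^*}}>\alpha$ by monotonicity, so we must show that $T_0^{i_j}$ lies inside the envelope of $\mathcal T_{k^*}$. We use the first part of Theorem \ref{11corespondenceonestep} for $\mathcal T_{k^*}$: non-rejection means that $T_0$ is inside that envelope at every remaining hypothesis. Both $i_{k^*}$ and all $i_j$ with $j>k^*$ are among the remaining hypotheses, so the equivalence holds in both directions. If $k^*=m+1$, only the first regime occurs.

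\emph{Main obstacle.} The delicate point is the ERL and area case in the first regime. We must ensure that the dynamically chosen $i_k$ really is a point where $T_0$ leaves the functional band, even though the band is built from $I^\alpha$ using the refined measure. A tie at raw rank $R_0$ might leave some curve $j\in I^\alpha$ with the same pointwise rank at $i_k$. The plan is to use no ties: at a point where $T_0$ has pointwise rank $r$, only $r-1$ curves are more extreme there. If $T_0$ were inside the band at $i_k$, some $j\in I^\alpha$ would be at least as extreme at $i_k$, giving $R_j\le R_0^{i_k}$ or $R_j< R_0^{i_k}$. Because $0\notin I^\alpha$ while $j\in I^\alpha$, comparing the refined measures gives $M_j\ge M_0$. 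We then need to show this is incompatible with $R_j\le R_0$ and with the lexicographic (ERL) or area tie-breaking. This is exactly the IGI argument of \citet{MrkvickaEtal2022}, localized to the selected point, and this step is where the algorithm's choice of component (largest for ERL, largest contribution for area) must be invoked.
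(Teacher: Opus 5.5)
Your proposal is correct and follows the paper's route: apply Theorem~\ref{11corespondenceonestep} to each intermediate test $\mathcal T_k$, and use that the final envelope at $i_k$ is copied from $\mathcal T_k$. It is in fact more complete, because your regimes $k=k^*$ and $j>k^*$ (envelope inherited from $\mathcal T_{k^*}$, equivalence secured by the monotone adjustment) cover a case that the paper's one-line proof passes over.

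The ERL/area ``obstacle'' you flag needs no appeal to the component-selection rule. If $T_0^{i_k}$ were inside the band, some $j\in I^\alpha$ with $j\neq 0$ would be strictly more extreme at $i_k$, on the side where $T_0^{i_k}$ attains its rank (strictness comes from the no-ties assumption). That gives $R_j\le R_j^{i_k}<R_0^{i_k}=R_0\le k_{\mathrm{thr}}\le R_j$, a contradiction at any point attaining $R_0$.
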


\begin{proof}
Consider an arbitrary hypothesis $H_{0i}$ and its corresponding test statistic $T^i_0$. In the step-down procedure, this hypothesis corresponds to some index $i_k$, meaning it is evaluated during the $k$-th subsequent single-step global test, denoted as $\mathcal{T}_k$. 

According to Theorem \ref{11corespondenceonestep}, for the single-step test $\mathcal{T}_k$, the $p$-value satisfies $p_{i_k} \leq \alpha$ if and only if $T^{i_k}_0 \notin (\widetilde T^{\alpha,\mathcal T_k, M}_\text{low}(i_k), \widetilde T^{\alpha,\mathcal T_k, M}_\text{upp}(i_k))$. Because the step-down algorithm defines the final strong FWER envelope at $i_k$ exactly using these single-step boundaries, i.e., $T^{\alpha, M}_\text{low}(i_k) = \widetilde T^{\alpha,\mathcal T_k,M}_\text{low}(i_k)$ and $T^{\alpha, M}_\text{upp}(i_k) = \widetilde T^{\alpha,\mathcal T_k,M}_\text{upp}(i_k)$, the one-to-one correspondence holds by definition for all considered measures $M$ (rank, cont, ERL, area, and qdir).
\end{proof}

\section{Simulation study}\label{sec:simstudy}

We performed a simulation study using a functional ANOVA model to compare the different methods, 
including the single-step and step-down envelope tests with rank, cont, ERL, area, and qdir measures. 
Our tests use the model coefficients as the test statistic, as proposed in this article. 
We also included in the comparisons the most commonly used test for functional GLMs, i.e., the $F$-max test \citep[see, e.g.,][]{WinklerEtal2014}.
The $F$-max test is based on the measure $\max_{i=1,\dots,m} F_i$ where $F_i$ is the $F$-statistic of the $i$th hypothesis. Thus, it is the unscaled test applied to $F$-statistics. 
We used the step-down version of $F$-max according to Algorithm \ref{strongenvalgorithm}.
We denote the step-down tests with the prefix "s", i.e., \texttt{sRank}, \texttt{sCont}, \texttt{sERL}, \texttt{sArea}, \texttt{sqdir}, and \texttt{sFmax}.

We generated data from the functional model 
$Y_{v}(t) = Z_{v}(t) + e(t)$ for $t \in [0,1]$, 
where $Z_v(t)$ specifies the group effect for individual $v$, $v=1,\dots,n$, 
and $e(t)$ is an error term.
For $Z_v(t)$, we considered 10 models.
We adopted the first 10 combinations of group means considered in \citet{MrkvickaMyllymaki2024} as the effects of two groups 1 and 2. 
Model 1 is the complete null case, i.e., $Z_v(t)=0$ for all individuals $v$ and $t\in[0,1]$.
In Models 2-10, $Z_v(t)=0$ for $t\in[0,1]$ if $v$ belongs to group 1, 
but for individuals $v$ of group 2, $Z_v(t)$ deviates from zero as specified in Table \ref{tab:simstudy_means}.
For Model 3, which is presented in the main text below, $Z_v(t) = 1$ for $t\in(0,0.20)$, and 0 otherwise. 
We considered two error distributions:
\begin{enumerate}
    \item Gaussian: $e(t) = \sigma_Z w(t)$, where $w(t)$ is a Gaussian process with the standard deviation $1$ and 
    the exponential covariance function with the scale parameter $\phi_Z = 0$ (independent case), $0.05$, 
    or $0.2$.  
    The values considered for $\sigma_Z$ are given in Table \ref{tab:simstudy_means} (bottom row).
    \item log Gaussian: 
    $e(t) = \sigma_Z \exp(w(t)) / \sqrt{(\exp(1)-1)\exp(1)}$. The resulting marginal distribution had skewness $(\exp(1)+2)\sqrt{\exp(1)-1} \approx 6.18$ and standard deviation $\sigma_Z$. 
\end{enumerate}

We generated functional samples from Models 1-10 at a resolution of $m=200$, using equidistant values of $t$ on $[0,1]$.
For each simulated sample, we considered the model 
$$
\mathbf{E}[Y_{v}(t)] = \beta_0(t) + \beta_1(t) \cdot \mathbf{1}(v \in \text{group } 2) 
$$
for all $m=200$ values of $t \in [0,1]$.
We generated resamples under the null hypotheses of no group effect by permuting the functional data across groups. For the considered ANOVA model, the resamples are exchangeable under the null hypothesis and the tests should therefore provide either exact or asymptotic strong FWER control. 
We performed the tests 
with estimated $\beta_1$ coefficients (all envelope tests) or $F$-statistics (\texttt{Fmax}, \texttt{sFmax}) as test statistics.

We wished to show: 
i) the methods empirically control FWER (for \texttt{sERL} and \texttt{sArea} asymptotically only), providing support to the theoretical results, 
ii) our proposed tests, which have greater graphical interpretation, have similar, or even higher, power than $F$-max, and
iii) how the proposed methods perform relative to each other.
First, both for the Gaussian and log Gaussian errors, we varied the number of observations $n_1$ of group 1 from 10 to 100, keeping the number of observations in the second group fixed to $n_2=3 n_1$. 
For each test, we used $s=999$ permutations.
Second, to show the behaviour of empirical FWER as $s$ increases, particularly for \texttt{sERL} and \texttt{sArea}, we considered the case $n_1=30$ and $n_2=90$ with $s=99, 199, 399, 599, 799, 999, 1999, 2999, 3999$ or $4999$ resamples.
Third, to show sensitivity of \texttt{sFmax} in power to imbalanced data, we varied $n_2$ from 10 to 3981 keeping $n_1$ fixed to $10$, while using $s=999$ permutations.
In this third case, we used $2/3 \sigma_Z$ instead of $\sigma_Z$ to have a reasonable power range.
For each considered model, we generated 1000 functional samples. 
In all cases, we considered the nominal $\alpha=0.05$. 
We inspected the control of FWER as well as power.
In this work, power is defined as the proportion of the alternative hypotheses rejected.
For concise presentation, we present only key observations in the main text (from Model 3 with $\phi_Z = 0.05$), while the complete set of results can be found in Appendix \ref{suppl:simstudy} (Figures \ref{fig:simstudy_fwer_Gaussian}-\ref{fig:simstudy_power_logGaussian2}).

All the methods controlled FWER for Model 1 (Figures \ref{fig:simstudy_fwer_Gaussian} and \ref{fig:simstudy_fwer_logGaussian}).
In Model 1, we have no $H_1$ hypotheses and are testing the weak control (the global test across all hypotheses).
For Models 2-10, we see that
\texttt{rank}, \texttt{qdir}, \texttt{sRank}, \texttt{sCont}, \texttt{sqdir} provide strong control,
while \texttt{sERL} and \texttt{sArea} do not provide such control for all $s$ (Figure \ref{fig:simstudy_fwer}, and \ref{fig:simstudy_fwer_Gaussian}-\ref{fig:simstudy_fwer_n}). 
However, \texttt{sERL} and \texttt{sArea} reach the desired FWER level with increasing $s$, which supports the asymptotic control results (Figures \ref{fig:simstudy_fwer} right, and \ref{fig:simstudy_fwer_n}).
We note also that the \texttt{sRank} test was very conservative.

Secondly, we inspected the power for measures \texttt{Fmax}, \texttt{sFmax}, \texttt{qdir}, \texttt{sqdir}, and \texttt{sCont}, which
control FWER strongly. All the methods had rather similar power in the Gaussian case (Figure \ref{fig:simstudy_power} left, and \ref{fig:simstudy_power_Gaussian}). 
Here, the $F$-statistic distribution is homogeneous across $t\in[0,1]$, i.e., this case is ideal for \texttt{sFmax}. 
The log Gaussian case represents an example where the $F$-statistic distribution is not homogeneous. Here, \texttt{sFmax} has clearly lower power than \texttt{qdir}, \texttt{sqdir} and \texttt{sCont} for all sample sizes $n_1$ and $n_2=3 n_1$ (Figure \ref{fig:simstudy_power} middle, and \ref{fig:simstudy_power_logGaussian}), and dramatically lower power under increasing imbalance (Figure \ref{fig:simstudy_power} right, and \ref{fig:simstudy_power_logGaussian2}). 
Comparing the proposed methods against each other, we see:
1) \texttt{sFmax} and \texttt{sqdir} have a bit higher power than \texttt{Fmax} and \texttt{qdir}, respectively, and
2) \texttt{sCont} is the most robust alternative across all studied cases.

\begin{figure}
    \centering
    \includegraphics[width=\linewidth]{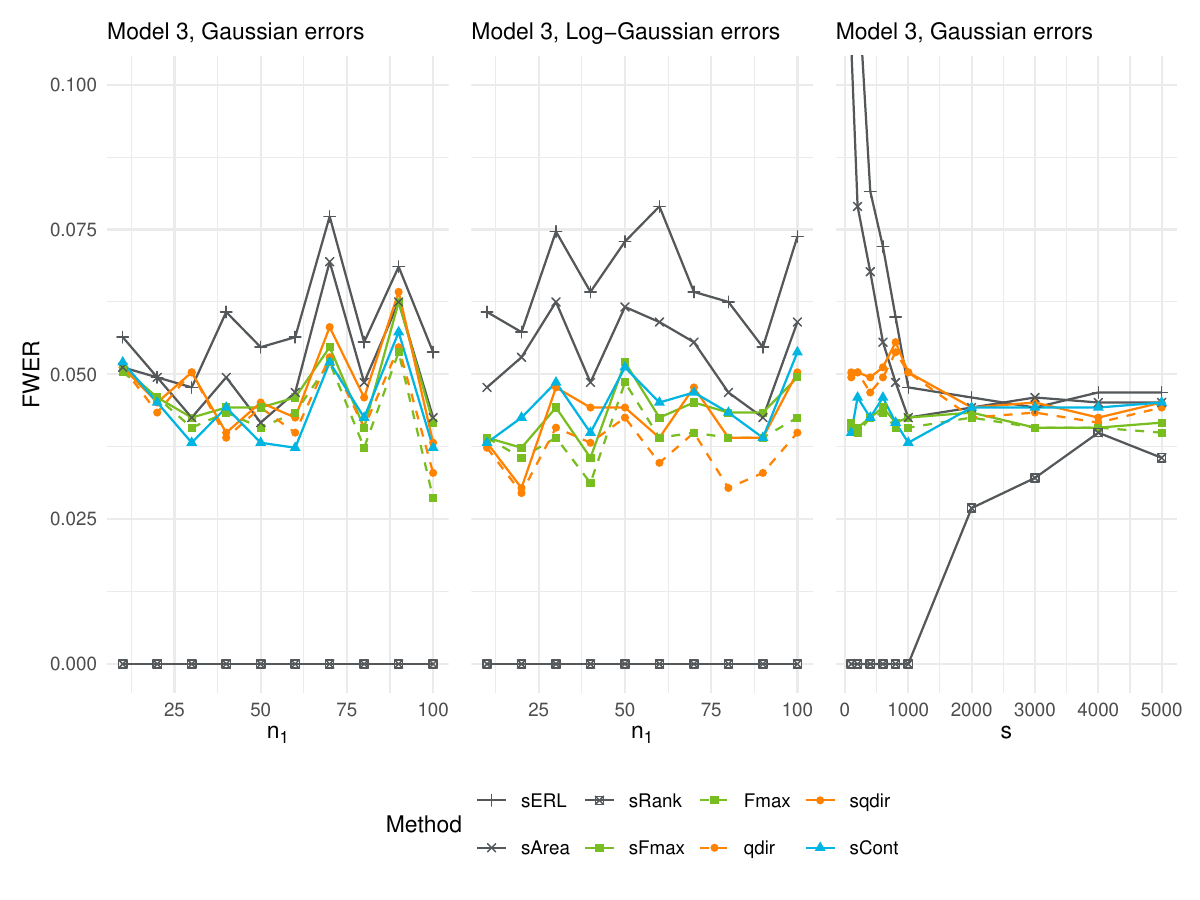}
    \caption{Empirical FWER 
    for the different methods (curves) for Model 3 with Gaussian or log-Gaussian errors as given in the headings, 
    either with respect to increasing 
    numbers of observations in the first group, $n_1$, keeping the number of simulations in the second group fixed to $n_2=3 n_1$ (i.e., also increasing) and the number of simulations $s=999$ (left \& middle),  
    or with respect to varying $s$ with fixed $n_1=30$ and $n_2=3 n_1$ (right). 
    In all cases, $\phi_Z = 0.05$, and $m=200$. 
    }
    \label{fig:simstudy_fwer}
\end{figure}

\begin{figure}
    \centering
    \includegraphics[width=\linewidth]{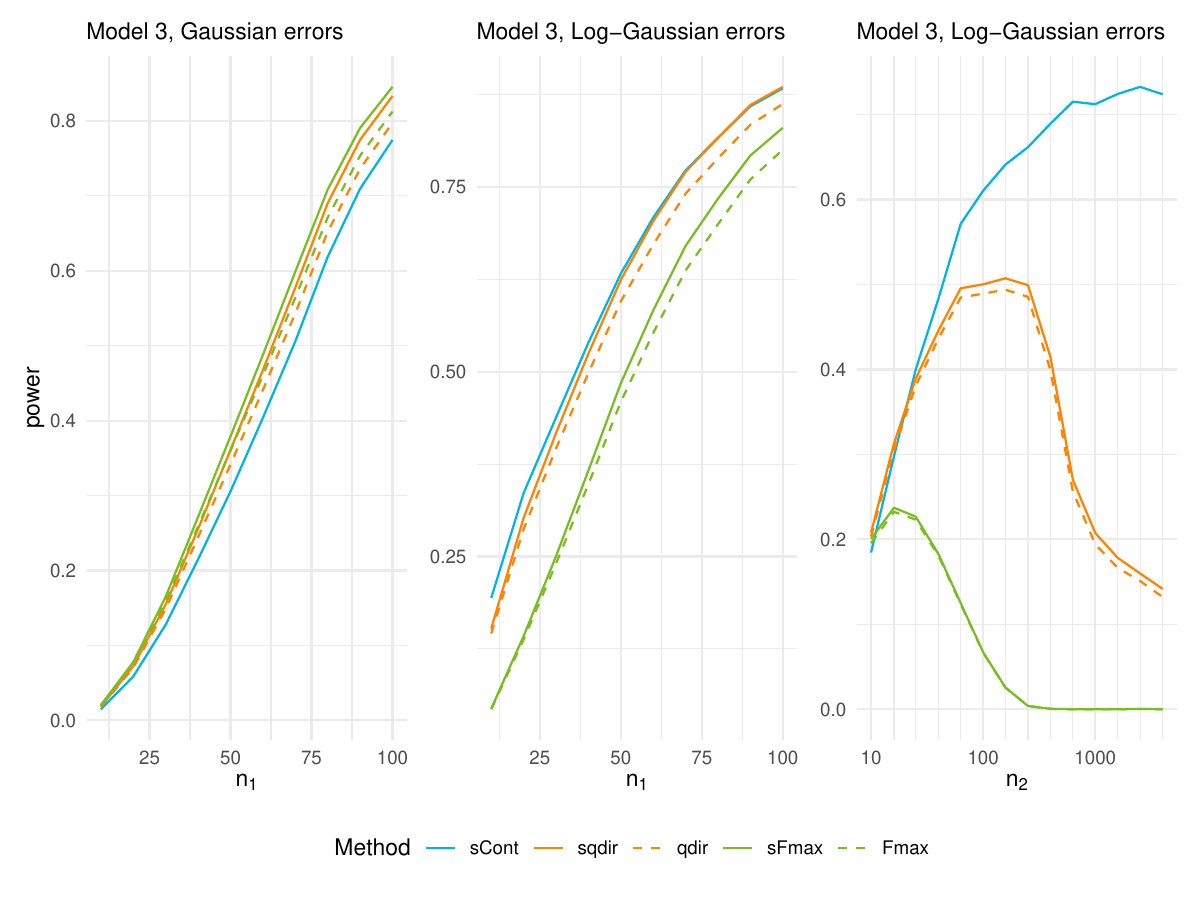}
    \caption{Empirical power (i.e., mean proportion of the alternative hypotheses rejected) for the different methods (curves) for Model 4 with Gaussian or log-Gaussian errors as given in the headings, 
    either with respect to varying numbers of observations in the first group, $n_1$, with $n_2 = 3 n_1$ (left \& middle), 
    or in the second group, $n_2$ with fixed $n_1=10$ (right).
    In all cases, $\phi_Z = 0.05$, $s=999$, and $m=200$.
    }
    \label{fig:simstudy_power}
\end{figure}

\section{Data study using a multivariate linear model}

The proposed formal graphical inference can also be used for individual coordinate testing in a multivariate linear model.
We show here an example study on how mixed forest stands' soil properties are affected by nearby tree species.
We repeat the analysis of \citet{SoronenEtal2025}, who conducted their analyses under FDR control, with our new proposed methods.
The nine soil properties considered are
carbon stock (C), 
soil nitrogen stock (N), and
C:N ratio in the forest floor (FF), 
these same properties (C, N, C:N) in mineral soil (0–10 cm depth; MS), 
as well as the fluxes of ammonium-N (NH4-N), nitrate-N (NO3-N) and 
the total amino acid-N (AA-N). 
The properties were measured near 17 alder, 14 birch and 17 spruce trees in Norway spruce dominated sites in southern Finland.
For details of the study sites and measurements made, see \citet[][incl.\ Table 1 for a study site summary]{SoronenEtal2025}.

We used the multivariate linear model of \citet{SoronenEtal2025}:
soil properties $\sim$ species + fertility + site + age, where species (alder, birch, spruce) is the interesting covariate, and fertility \citep[two classes, see][]{SoronenEtal2025}, 
site, 
and age of the stand are (fixed) nuisance covariates.
For each soil property, we have the same null hypothesis: the tree species has no effect on the soil property, i.e., 
\begin{equation}\label{eq:forest_H01}
H_{0i}: \beta_{\text{Alder},i}=\beta_{\text{Birch},i}=\beta_{\text{Spruce},i}=0 
\end{equation}
for all soil properties $i=1,2,\dots,9$.
Here $\beta_{\text{Alder},i}$, $\beta_{\text{Birch},i}$ and $\beta_{\text{Spruce},i}$ are elements of the $\beta$-coefficient vector related to the species factor in the linear model for the $i$th soil property.

To compare the magnitude of effects across different soil properties, 
we first show the test based on standardized coefficients, allowing comparison of effects across soil properties and tree species. 
Organizing the coefficients by soil property, our test statistic is
\begin{equation}\label{eq:forest_teststat1}
T = (\beta_{\text{Alder},1}, \beta_{\text{Birch},1}, \beta_{\text{Spruce},1}, 
\ldots, 
\beta_{\text{Alder},9}, \beta_{\text{Birch},9},
\beta_{\text{Spruce},9}).    
\end{equation}
The results show the effect of each species in comparison to the mean effect for each soil property (Figure \ref{fig:mult_example_forest_means}). 
The panels of Figure \ref{fig:mult_example_forest_means} show global soil-property-specific $p$-values, with statistically significant effects for N and C:N in FF, and C:N in MS.
More specifically, alder affected N in FF positively, and C:N negatively,
while spruce had a positive effect on C:N.
All these standardized effects are about 0.6 in absolute size. 
This means, e.g., that having a nearby alder tree increased N in FF by 0.6 standard deviation relative to the mean N.
There was also a positive spruce effect of 0.4 on the C:N ratio.

The hypothesis \eqref{eq:forest_H01} can be equivalently expressed as $H'_{0i}$: $\beta_{\text{Alder},i}-\beta_{\text{Birch},i}=0$, $\beta_{\text{Alder},i}-\beta_{\text{Spruce},i}=0$ and $\beta_{\text{Birch},i}-\beta_{\text{Spruce},i}=0$.
To test this hypothesis specifically, the test statistic can be taken to consist of the differences between $\beta$-coefficients related to species, as in \citet{MrkvickaEtal2021a} for global testing. Organizing the coefficients by soil property,
$T = (\beta_{\text{Alder},1}-\beta_{\text{Birch},1}, \beta_{\text{Alder},1}-\beta_{\text{Spruce},1}, \beta_{\text{Birch},1}-\beta_{\text{Spruce},1}, \ldots, 
\beta_{\text{Alder},9}-\beta_{\text{Birch},9}, \beta_{\text{Alder},9}-\beta_{\text{Spruce},9}, \beta_{\text{Birch},9}-\beta_{\text{Spruce},9})$.
This test statistic, which was also used by \citet{SoronenEtal2025}, allows direct detection of differences between species.
In this case, it is reasonable to inspect differences in the original units, while differences in standardized units are less interpretable.
The test output (Figure \ref{fig:mult_example_forest_contrasts}) shows that tree species had a significant effect on N and C:N ratio in FF, similarly as did the test based on the test statistic \eqref{eq:forest_teststat1}.
Here, the results indicate that N was on average about 0.045 kg/m$^2$ higher near alders than near birches and 
0.03 kg/m$^2$ higher near alders than near spruces, and 
that the C:N ratio was about five units lower near alders than near spruces, 
after adjusting for the effects of the other factors. 
However, in contrast to the results with the test statistic \eqref{eq:forest_teststat1}, using the statistic of differences C:N in MS was not statistically significant ($p=0.0858$). The test output, however, shows that the differences between alder and spruce, as well as between birch and spruce are borderline significant (close to the envelope boundary), and the C:N ratio was 5 units smaller near alder and birch than near spruce. 
Compared with the results of \citet{SoronenEtal2025}, our analysis uses stronger control, namely strong FWER instead of FDR control. The effects on C:N in MS and a difference between alder and birch in C were not statistically significant, although they were borderline cases with the same sizes of effects as reported by \citet{SoronenEtal2025}.

\begin{figure}
    \centering
    \includegraphics[width=1\linewidth]{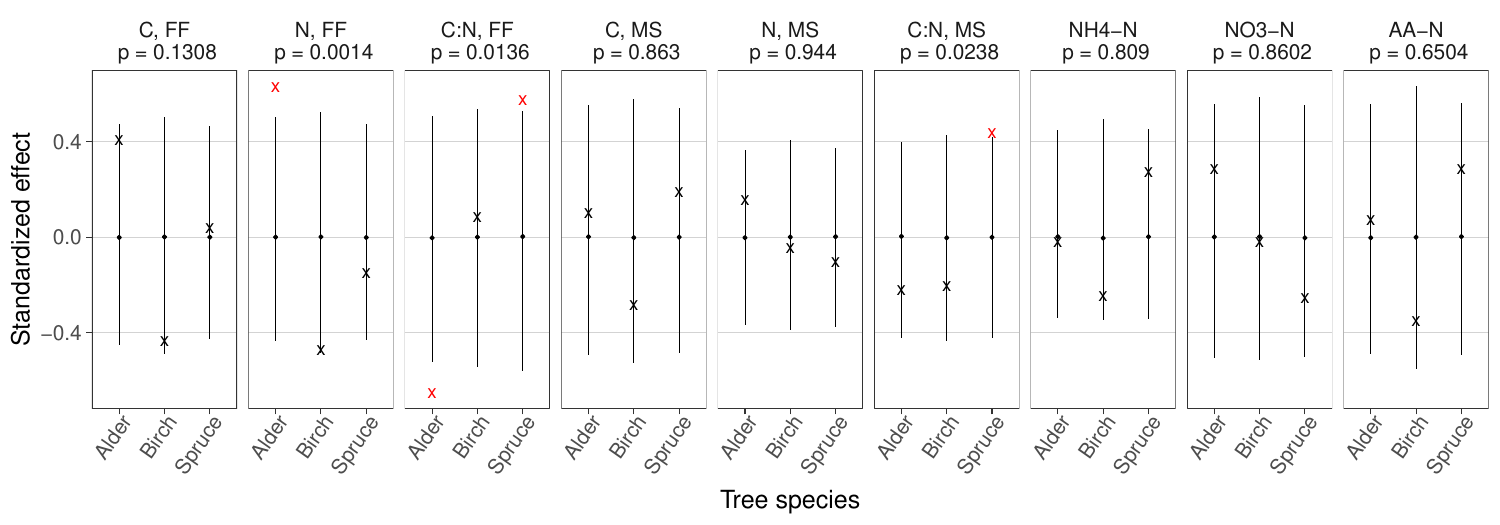}
    \caption{The test output for the hypotheses \eqref{eq:forest_H01} of no effect of tree species on the soil properties using the test statistic \eqref{eq:forest_teststat1}. 
    Panels show the mean standardized effects for each tree species and soil property (FF = forest floor, MS = mineral soil 0-10 cm depth), after accounting for fertility, site and age.
    The crosses are the observed standardized effects. 
    The whiskers represent the acceptance regions of the tests. 
    The $p$-values in the panels are the global soil-property-specific $p$-values, and red crosses indicate the significant tree species effects within each soil property.}
    \label{fig:mult_example_forest_means}
\end{figure}

\begin{figure}
    \centering
    \includegraphics[width=1\linewidth]{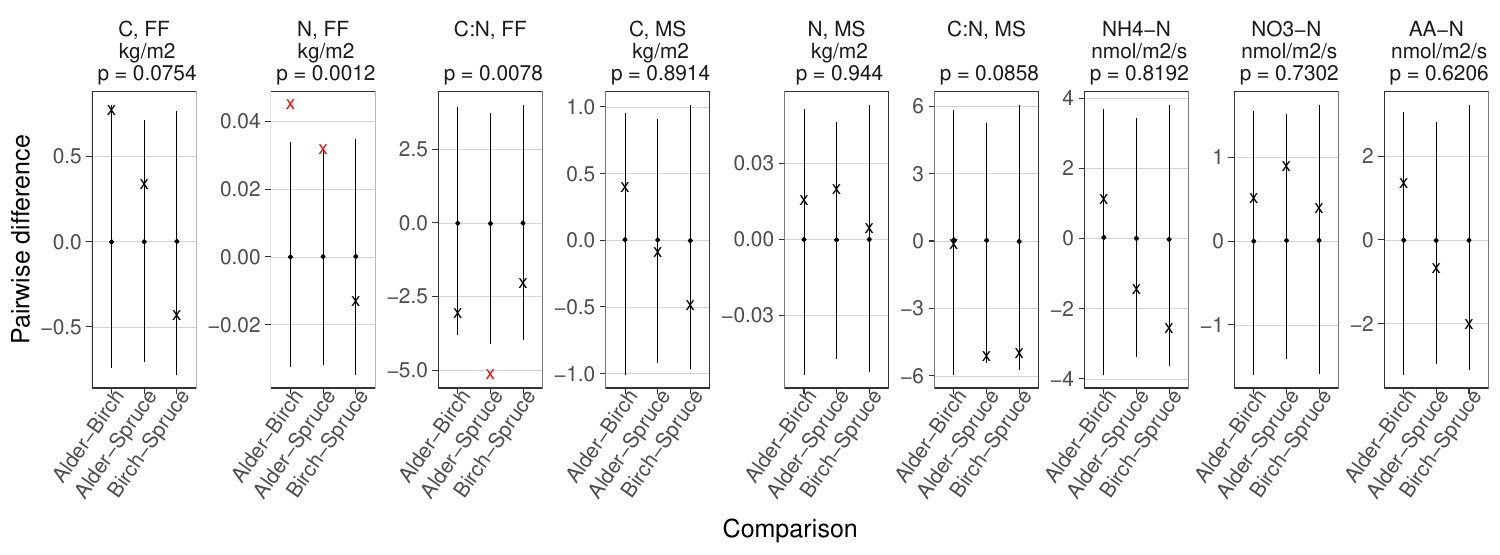}
    \caption{The test output of testing hypotheses of no differences between the tree species effects on the soil properties using the difference-based test statistic. 
    Panels show the mean difference between two species' effects for each pairwise species comparison and each soil property (FF = forest floor, MS = mineral soil 0-10 cm depth), after accounting for fertility, site and age.
    The crosses show the observed pairwise differences between the species.
    The whiskers represent the acceptance region of the test.
    The $p$-values in the panels are the global soil-property-specific $p$-values, and red crosses indicate the significant pairwise differences between tree species effects within each soil property.}
    \label{fig:mult_example_forest_contrasts}
\end{figure}

\section{Discussion and Conclusions}\label{sec:discussion}

In this article, we introduced the step-down global envelope test, a novel framework for formal graphical inference that achieves strong family-wise error rate (FWER) control. By combining the universality of resampling methods with the rich interpretability of global envelopes, 
the framework enables statistical significance and effect size to be assessed simultaneously.

Among the evaluated measures, the continuous step-down measure (\texttt{sCont}) provides the most robust and reliable choice. Under exchangeability, it guarantees strong FWER control for any finite number of permutations. The \texttt{sqdir} method also provides exact strong FWER control but relies on a three-parameter approximation of the test statistic's distribution. It was less robust than \texttt{sCont} in terms of power across different simulation scenarios. 
The \texttt{sERL} and \texttt{sArea} methods achieve strong FWER control only asymptotically. Furthermore, our toy example and simulation study demonstrate an important limitation of the traditional $F$-max procedure: because the $F$-max test relies on the assumption of a homogeneous null distribution across the hypotheses, it can suffer from a substantial loss of power when the distribution is heterogeneous, as demonstrated here for non-Gaussian data combined with imbalanced designs—a setting that can commonly arise in applications of permutation tests. 

The primary methodological contribution of this paper is the extension of global envelopes to step-down procedures. While we demonstrated that step-down methods can improve power over their single-step counterparts, their main advantage lies in local inference. 
Applied researchers rely heavily on local interpretations (e.g., identifying specific time points or spatial regions). 
Previous single-step envelopes, which primarily provided weak FWER control, could exhibit liberal behaviour for local hypotheses, particularly in borderline cases. 
The proposed step-down \texttt{sCont} method resolves this fundamental issue by providing exact strong FWER control: researchers can interpret each local exceedance of the envelope, knowing that the strong FWER is strictly controlled.

However, we do not dismiss single-step tests or the \texttt{sERL} and \texttt{sArea} methods. If the primary research question concerns the global hypothesis rather than the localization of individual deviations, the \texttt{sERL} and \texttt{sArea} tests can be particularly useful. By accumulating information from neighbouring hypotheses, the ERL and area measures can provide high power for the global test. Their step-down versions, \texttt{sERL} and \texttt{sArea}, additionally reduce the local liberality of their single-step counterparts, while achieving strong FWER control asymptotically.

A distinct advantage of our universal framework is the ability to directly use effect sizes—such as standardized regression coefficients—as the primary test statistic. This shifts the paradigm from merely detecting significance to evaluating statistical significance together with the practical magnitude of the effect. 
For categorical variables, the methodology accommodates different parametrizations: researchers can visualize contrasts to the overall mean, evaluate all post-hoc pairwise differences simultaneously, or compute contrasts against a specific reference category. The intuitive visual nature of these effect-size envelopes makes them particularly attractive for applied research, where accessible interpretation is important alongside rigorous statistical control.

While the present study focused on one-dimensional functional data, the methodology naturally extends to higher dimensions. For 2D functional data (e.g., spatial domains), envelope exceedances can be visualized by plotting circles whose radii correspond to the magnitude of the deviation \citep{MyllymakiMrkvicka2024}. For 3D data, the results can be visualized through a series of 2D slices. Moreover, the flexibility of the global envelope allows for the concatenation of multiple functional test statistics, enabling 
simultaneous testing of different aspects of functional deviations or evaluation of multiple distinct test statistics within a single, unified graphical output.

The methods proposed herein will be fully implemented in the R package \texttt{GET} \citep{MyllymakiMrkvicka2024}. In addition to the strong and weak FWER controls discussed in this paper, the package also offers envelopes designed to control the false discovery rate (FDR). Our ongoing research aims to extend the linear modelling framework to accommodate mixed-effects models and heteroscedastic errors. While these extensions require different and more complex permutation strategies, the same general construction and graphical interpretation of the step-down global envelopes proposed in this paper remain applicable.

\section*{Acknowledgements}

The authors thank Päivi Soronen (Luke) for providing the soil properties data.
MK and MM conducted the study within the Research Council of Finland's flagship ecosystem for Forest-Human-Machine Interplay---Building Resilience, Redefining Value Networks and Enabling Meaningful Experiences (UNITE) (Grant numbers 359174 and 374286). 

\vspace{1em}
\noindent \textbf{Disclosure regarding AI use:} The authors utilized a generative AI assistant solely for language refinement and copy-editing to ensure standard academic English. All methodological innovations, statistical implementations, and data interpretations remain the independent work of the authors. The authors have reviewed and assume full responsibility for the content of the final manuscript.

\bibliography{references_cited}

\appendix

\setcounter{figure}{0}
\setcounter{table}{0}
\setcounter{equation}{0}
\renewcommand{\thefigure}{A\arabic{figure}}
\renewcommand{\thetable}{A\arabic{table}}
\renewcommand{\theequation}{A\arabic{equation}}
\renewcommand{\theHfigure}{A\arabic{figure}} 
\renewcommand{\theHtable}{A\arabic{table}} 
\renewcommand{\theHequation}{A\arabic{equation}}

\section{A toy example that produces highly heterogeneous $F$-statistics }\label{app:toy_example}

The severe heterogeneity of the $F$-statistic under permutation, which critically undermines the standard $F$-max procedure as discussed in Section \ref{sec:fmax_limitations}, can appear for various reasons. Here we present a simple toy example that demonstrates it for log-normal data containing regions with balanced and unbalanced designs. Such a design naturally arises in many scientific applications where the covariate of interest is location-dependent across the investigated domain (e.g., localized presence of a brain lesion in neuroimaging, spatially varying genetic mutations in genomics, or time-varying treatment of plants).

To vividly illustrate this mechanism, consider a simplified toy scenario with a skewed (log-normal) response and $N=16$ observations. Suppose a functional domain contains regions with identical theoretical degrees of freedom for an $F$-test ($F_{1, 14}$) but varying levels of design imbalance: a perfectly balanced Region A (8 vs. 8), a moderately imbalanced Region B (14 vs. 2), and an extremely imbalanced Region C (15 vs. 1). 

Under permutation, extreme values in the data interact very differently with the design matrix depending on the group sizes. In imbalanced designs, observations in the smaller group carry disproportionately high leverage. When large values from the skewed response are permuted into these very small groups, the model absorbs much of the variance, artificially deflating the residual sum of squares and causing the pointwise $F$-statistic to inflate massively. 

Consequently, even under the same null hypothesis, the empirical permutation distributions become heavily heterogeneous across the domain. According to our simulations for this scenario, the $99\%$ quantile of the permuted $F$-statistic in the moderately imbalanced Region B is 3.4 times larger than in the perfectly balanced Region A. In the extremely imbalanced Region C, this discrepancy increases to an approximately 23-fold difference.

If a global maximum statistic ($F$-max) is used to control the strong FWER across this domain, the constant critical threshold will be entirely dictated by the extreme values produced in the highly imbalanced Region C. This artificially high, constant threshold renders the test virtually blind to any genuine effects in the stable Region A, obliterating statistical power and demonstrating why a locally adaptive procedure like the proposed Global Envelope Test (GET) is essential.

\section{Implementation of Algorithm \ref{strongenvalgorithm}}\label{app:implementation}

Algorithm \ref{strongenvalgorithm} can be efficiently implemented without the need to re-compute the measures for test statistics $T_j$ from scratch on each iteration. 
Namely, for the step-down envelope with the cont measure, the pointwise continuous ranks $C_j^i$ can be computed only once and
the continuous rank measure \eqref{eq:cont} can be straightforwardly computed for any required set of hypotheses using the $C_j^i$. 
This same approach also works for the rank, qdir and F-max measures, which are all infimal depths.

The area measure \eqref{eq:area_rank_measure} is defined using $R_j$ and $C_j^i$. Since $R_j = \lceil \min_i C_j^i \rceil$, the area measure can also be computed based on the $C_j^i$ computed in the beginning.

For the ERL measure, the pointwise ranks can be computed and sorted only once.
At each step of Algorithm \ref{strongenvalgorithm}, a hypothesis is dropped. 
The pointwise rank of the observation for the dropped hypothesis is $R_0^i$.
The sorted ranks for each function can now be updated by removing an element that is equal to $R_0^i$.

When the approximate method from \citet{MrkvickaEtal2022} is used for ERL, the following update is used on each function separately.
Recall that the approximation only tracks the count of the six most extreme pointwise ranks for each function. 
If the $R_0^i$ is larger than the largest rank in the auxiliary variables, then no updates on the counts are needed.
If the $R_0^i$ is equal to one of the ranks in the auxiliary variables, then the count of that rank is decreased by one. If this results in a zero count, then the auxiliary variables are recomputed fully.

\section{Additional description and results from the simulation study}\label{suppl:simstudy}

Table \ref{tab:simstudy_means} gives specifications for all Models 1-10 used in the simulation study. 
Figures \ref{fig:simstudy_fwer_Gaussian}-\ref{fig:simstudy_power_logGaussian2} provide results from all Models 1-10, complementing Figures \ref{fig:simstudy_fwer} and \ref{fig:simstudy_power} of the main text.
In particular, Figures \ref{fig:simstudy_fwer_Gaussian}-\ref{fig:simstudy_fwer_n} provide the results for empirical FWER, similar to the panels shown in Figures \ref{fig:simstudy_fwer}.
Figures \ref{fig:simstudy_power_Gaussian}-\ref{fig:simstudy_power_logGaussian2} provide the power results, similar to the panels shown in Figures \ref{fig:simstudy_power}.

\begin{table}[h!]
    \caption{\label{tab:simstudy_means} Ten models of the simulation study: specified group effects $Z_v(t)$, $t\in [0,1]$, for individuals $v$ in group 1 or 2, parameters, proportions of the true nulls ($m_0 / m$), and the standard deviation $\sigma_Z$ of the Gaussian error. 
    For illustration of the group effects, see the mean functions in \citet[][Figure A1]{MrkvickaMyllymaki2024}.}
\centering
\resizebox{\columnwidth}{!}{ 
\begin{tabular}{l|lllll} \hline
Model &    1   & 2-4   & 5-7   & 8-10  \\\hline 
Name  &   No diff. &  Abrupt & Fluctuating &  Soft \\
$Z_v(t)$ for $v \in$ group 1  &  0 & 0 & 0 & 0 \\
$Z_v(t)$ for $v \in$ group 2  &  0 &  $1$ for $t \in (0,a)$,  & $\max(0, \sin(b t)-t)$ & $\exp\{-((t - 0.5)^2/(2 \sigma^2)\}$  \\
            &    &  $0$ otherwise          &                        & for $t \in [0.5-3\sigma, 0.5+3\sigma]$, \\
            &    &                         &                        & $0$ otherwise \\
Parameters &  &  $a = 0.05$, $0.20, 0.50$ &  $b=20, 50, 100$ & $\sigma = 0.01, 0.025, 0.05$ &  \\
$m_0/m$     &  1 & 0.95, 0.8, 0.5 & 0.65, 0.67, 0.67 & 0.94, 0.85, 0.70  \\
$\sigma_Z$  &  1.5 &  2.1 &  0.9 & 0.9 \\\hline
    \end{tabular}
    }
\end{table}


\begin{figure}[h!]
    \centering
    \includegraphics[width=\linewidth]{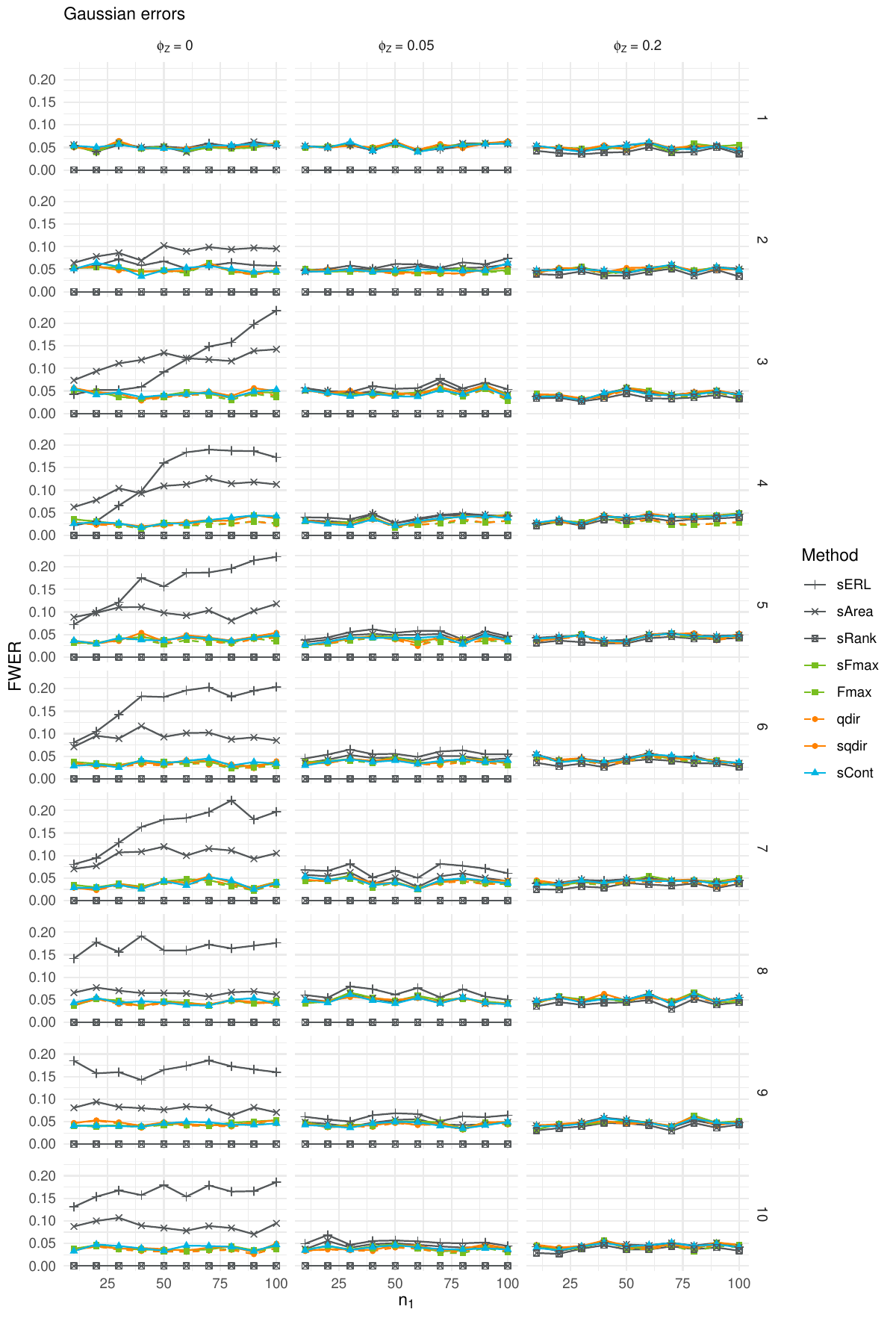}
    \caption{Empirical FWER 
    for the different methods (curves) and numbers of observations in the first group, $n_1$ (x-axis), for Models 1-10 (rows) with Gaussian errors and range of correlation $\phi_Z = 0$, $0.05$ or $0.2$ (columns). 
    In all cases, $n_2=3 n_1$, $s=999$ and $m=200$.
    }
    \label{fig:simstudy_fwer_Gaussian}
\end{figure}

\begin{figure}[h!]
    \centering
    \includegraphics[width=\linewidth]{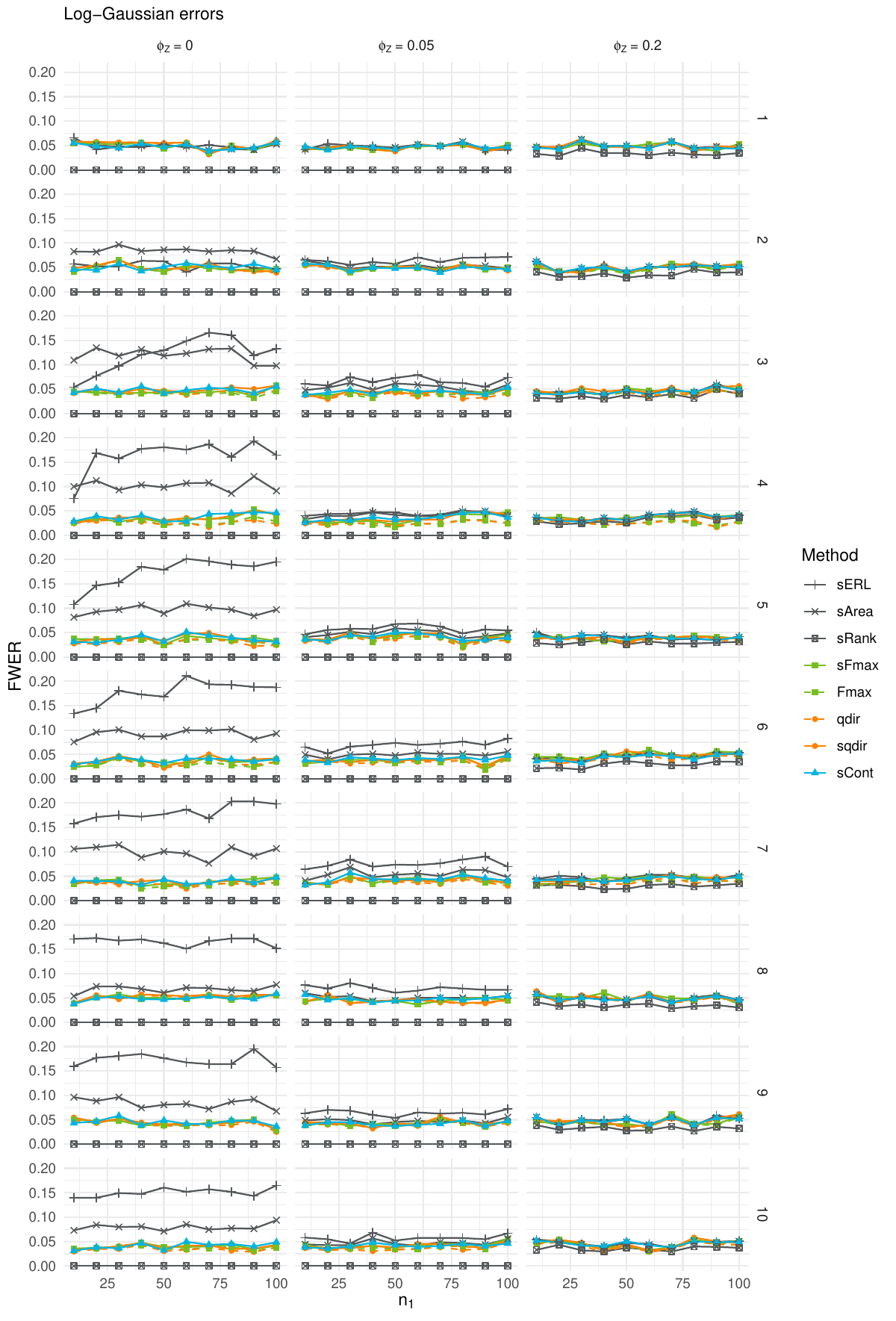}
    \caption{Empirical FWER 
    for the different methods (curves) and numbers of observations in the first group, $n_1$ (x-axis), for Models 1-10 (rows) with log Gaussian errors and range of correlation $\phi_Z = 0$, $0.05$ or $0.2$ (columns). 
    In all cases, $n_2=3 n_1$, $s=999$ and $m=200$.
    }
    \label{fig:simstudy_fwer_logGaussian}
\end{figure}

\begin{figure}[h!]
    \centering
    \includegraphics[width=\linewidth]{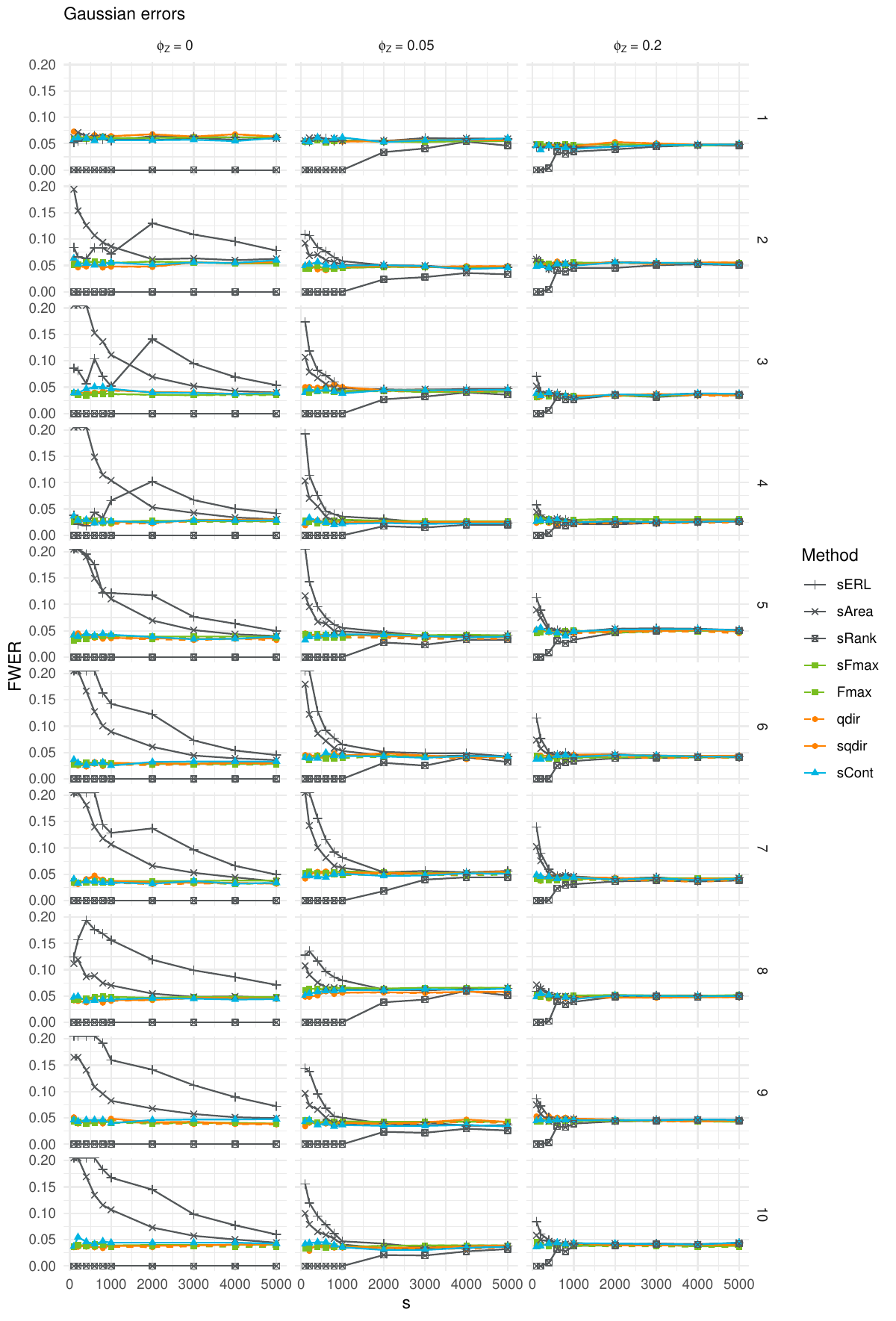}
    \caption{Empirical FWER 
    for the different methods (curves) and numbers of resamples $s$ (x-axis) for Models 1-10 (rows) with Gaussian errors and range of correlation $\phi_Z = 0$, $0.05$ or $0.2$ (columns). 
    The numbers of observations are $n_1 = 30$ and $n_2 = 90$, and $m=200$.
    }
    \label{fig:simstudy_fwer_n}
\end{figure}

\begin{figure}[h!]
    \centering
    \includegraphics[width=\linewidth]{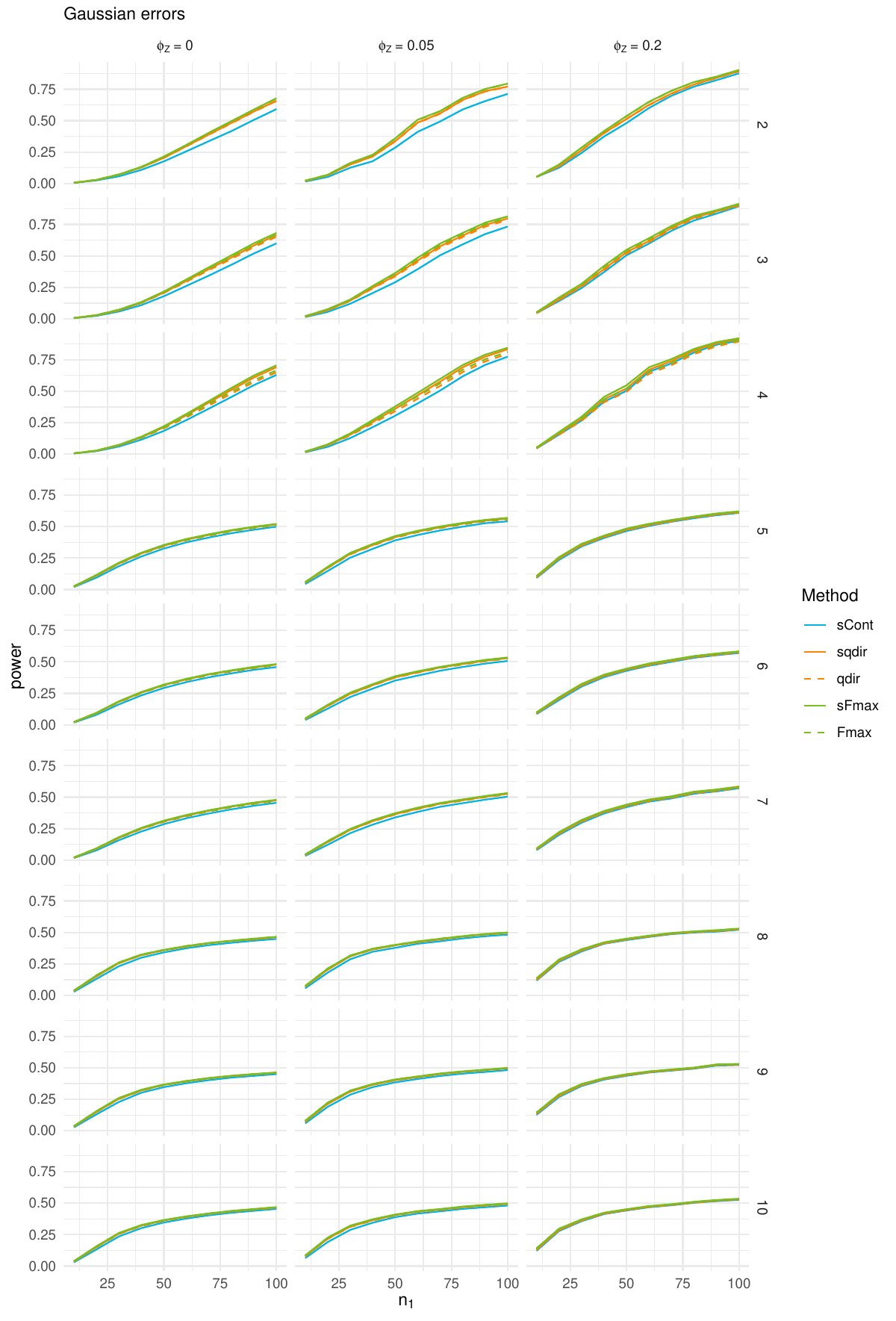}
    \caption{Empirical power (mean proportion of the alternative hypotheses rejected) 
    for the different methods (curves) and numbers of observations in the first group, $n_1$ (x-axis), for Models 2-10 (rows) with Gaussian errors and range of correlation $\phi_Z = 0$, $0.05$ or $0.2$ (columns).  
    The number of observations in the second group is $n_2=3 n_1$,
    $s=999$ and $m=200$.
    }
    \label{fig:simstudy_power_Gaussian}
\end{figure}

\begin{figure}[h!]
    \centering
    \includegraphics[width=\linewidth]{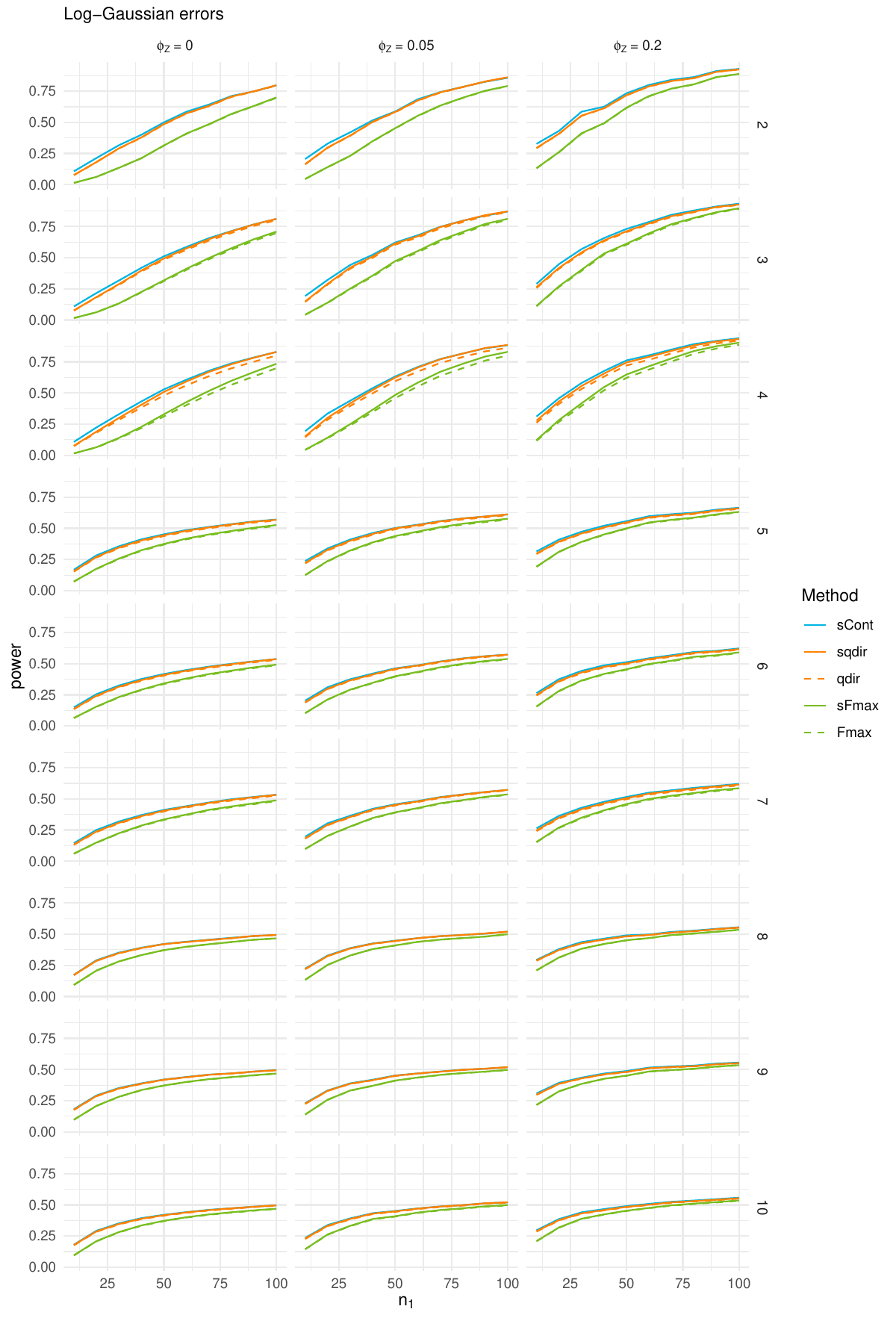}
    \caption{Empirical power (mean proportion of the alternative hypotheses rejected) 
    for the different methods (curves) and numbers of observations in the first group, $n_1$ (x-axis), for Models 2-10 (rows) with log Gaussian errors and range of correlation $\phi_Z = 0$, $0.05$ or $0.2$ (columns). 
    The number of observations in the second group is $n_2=3 n_1$,
    $s=999$ and $m=200$.
    }
    \label{fig:simstudy_power_logGaussian}
\end{figure}

\begin{figure}[h!]
    \centering
    \includegraphics[width=\linewidth]{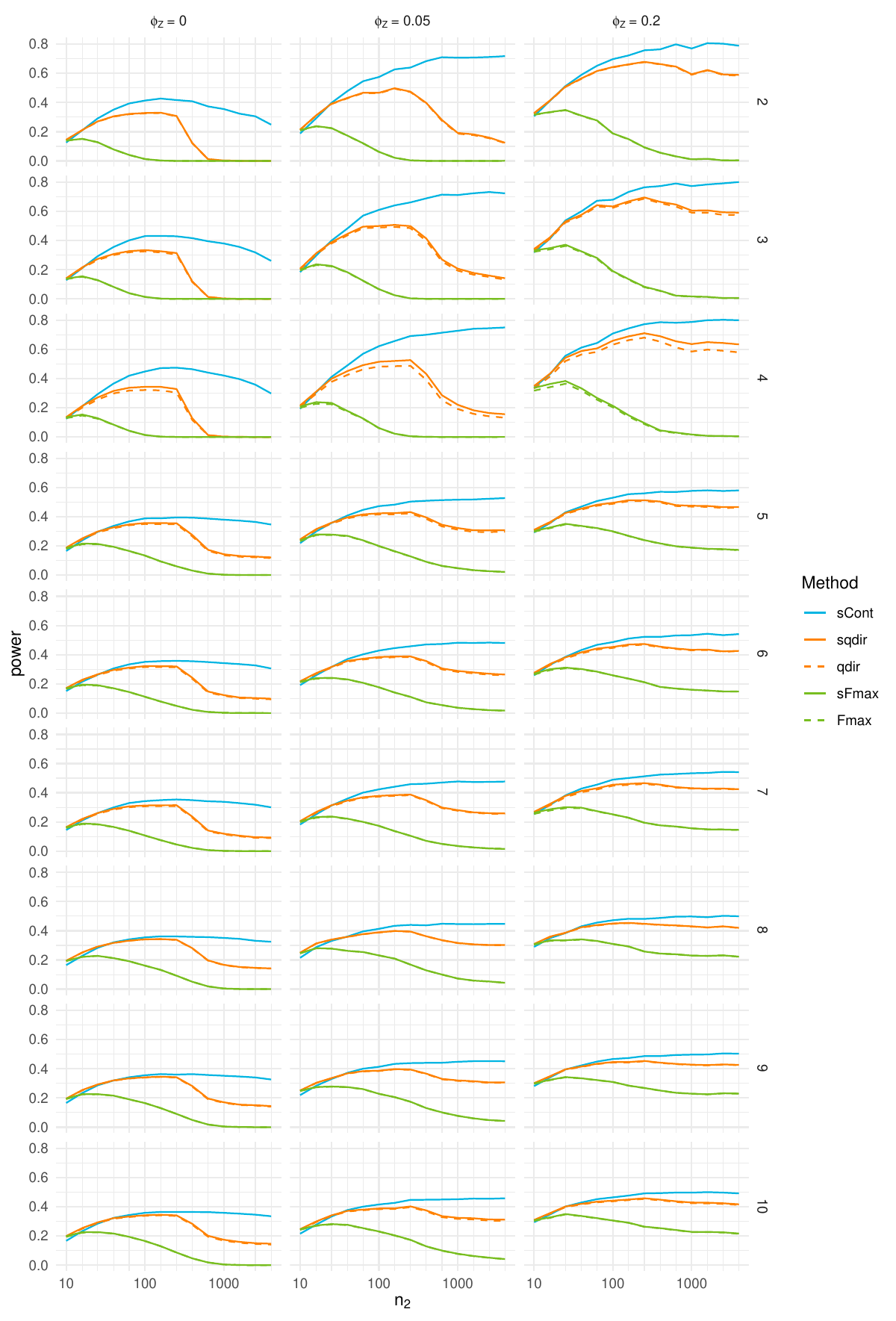}
    \caption{Empirical power (mean proportion of the alternative hypotheses rejected) 
    for the different methods (curves) and numbers of observations in the second group, $n_2$ (x-axis), for Models 2-10 (rows) with log Gaussian errors and range of correlation $\phi_Z = 0$, $0.05$ or $0.2$ (columns). 
    The number of observations in the first group is $n_1=10$,
    $s=999$ and $m=200$.
    }
    \label{fig:simstudy_power_logGaussian2}
\end{figure}

\end{document}